\documentclass[10pt,conference,onecolumn]{IEEEtran}
\usepackage{amsmath}
\usepackage{latexsym}
\usepackage{amssymb}

\usepackage{ifpdf}
\ifpdf
\usepackage[pdftex]{graphicx}
\usepackage{epstopdf}
\else
\usepackage{graphicx}
\fi

\newtheorem{theorem}{Theorem}
\newtheorem{lemma}{Lemma}
\newtheorem{proposition}{Proposition}

\abovedisplayskip=0pt
\abovedisplayshortskip=0pt

\title{Co-evolution of Content Popularity and Delivery in Mobile P2P Networks}
\begin{document}
\author{
\IEEEauthorblockN{Srinivasan Venkatramanan and Anurag Kumar}
\IEEEauthorblockA{Department of Electrical Communication Engineering\\
Indian Institute of Science\\
Bangalore 560012, India\\
Email: vsrini,anurag@ece.iisc.ernet.in}
% \and
% \IEEEauthorblockN{Anurag Kumar}
% \IEEEauthorblockA{Department of Electrical Communication Engineering\\
% Indian Institute of Science\\
% Bangalore 560012, India\\
% Email: anurag@ece.iisc.ernet.in}
}
\maketitle
\begin{abstract}
Mobile P2P technology provides a scalable approach to content delivery to a large number of users on their mobile devices. In this work, we study the dissemination of a \emph{single} content (e.g., an item of news, a song or a video clip) among a population of mobile nodes. Each node in the population is either a \emph{destination} (interested in the content) or a potential \emph{relay} (not yet interested in the content). There is an interest evolution process by which nodes not yet interested in the content (i.e., relays) can become interested (i.e., become destinations) on learning about the popularity of the content (i.e., the number of already interested nodes). In our work, the interest in the content evolves under the \emph{linear threshold model}. The content is copied between nodes when they make random contact. For this we employ a controlled epidemic spread model. We model the joint evolution of the copying process and the interest
evolution process, and derive the joint fluid limit ordinary differential equations. We then study the selection of the parameters under the content provider's control, for the optimization of various objective functions that aim at maximizing content popularity and efficient content delivery.
\end{abstract}

\section{Introduction}
Peer-Peer (P2P) architectures help relieve file servers from excessive load by enabling clients to communicate among themselves and exchange content, hence providing a scalable approach to content delivery for a large audience. With the proliferation of smart phones and high speed mobile Internet, there is increasing download of media content directly into mobile devices. Mobile P2P technology aims to leverage this trend and several possible architectures \cite{diaz-etal07mobilep2p-survey} have been proposed. In this paper we are concerned with peer-to-peer spread of the content under the delay tolerant networking (DTN) paradigm.

While studying content distribution, it is essential to understand the evolution of demand for the content. For modeling demand evolution, we can adopt the point of view that the existing demand for an item influences others also to get interested in it. There has been considerable interest in such models in the area of viral marketing research. For example, online social networks (Facebook, Twitter, etc.) provide a platform for the users to exchange their views about a particular media content with members of their social circle and end up \emph{influencing} each other. Content creators interested in increasing the popularity of the content use several data mining techniques to identify optimal viral marketing strategies \cite{domingos02mining-viral-marketing} and to identify the most influential users in a social network \cite{kempe-etal03max-spread-infl}. In this paper, we are also concerned with spread of interest in the item of media content, and for this we adopt an influence spread model introduced earlier in the context of viral marketing \cite{kempe-etal03max-spread-infl}. 

Evidently, content providers need to adopt their content dissemination strategies according to the prevailing interest in the particular content. Though there have been several works that discuss optimal strategies for content distribution in mobile ad hoc networks in the DTN setting \cite{khouzani-etal10patch-dissemination},\cite{singh-etal11dtn-multi-destination} and in mobile P2P \cite{repantis-kalogeraki04data-dissem-mp2p}, they do not address the issue of jointly modeling the spread of interest in the content and the content itself. In our work, we aim to model the joint evolution of demand and spread of the content in a mobile P2P setting. 

A recent work \cite{shakkottai-etal10demand-aware-content-spread} addresses a related problem of demand-aware distribution by employing fluid models for the viral spread of demand and aims at obtaining hybrid P2P and client-server architectures that can meet the demand.  While \cite{shakkottai-etal10demand-aware-content-spread} assumes that the demand spread is uncontrolled and optimizes for content delivery, in this work, we separately discuss optimizing the parameters of the demand evolution process (to increase the popularity the content), as well as the content delivery process (to efficiently meet the content demand). 

\hspace{-0.3cm}\textit{Contributions:}
Our main contributions are:
\begin{itemize}
 \item Developing ordinary differential equation (o.d.e.\ ) models for the co-evolution of popularity of the content and its spread via controlled epidemic copying in a mobile P2P environment
 \item Using these o.d.e. approximations to provide insights into the choice of parameters for the content provider in order to optimize various system performance objectives
\end{itemize}

In Section~\ref{sec:math-model} we introduce the combined system model and the related notations. In Section~\ref{sec:HILT-fluid} we analyse a threshold based model for evolution of interest and derive its fluid limit for general threshold distributions. We also discuss the effect of threshold distribution on the model, and show that SIR epidemic model \cite{kermack-mckendrick27SIR-epidemics} is a special case of our model. In Section~\ref{sec:HILT-SI-fluid} we derive fluid limits for evolution of spread on top of the interest evolution model. We finally provide numerical results for some practical optimizations (Section~\ref{sec:numerical}) associated with the evolution of interest and the joint-evolution of interest and spread of the content.

\section{The System Model}
\label{sec:math-model}
We consider a scenario where the users with their mobile devices, modeled as a population of mobile \emph{nodes}, meet each other according to a random contact model. We assume that the nodes are homogeneous and we wish to model the spread of a \emph{single} content among these nodes. As motivated in the Introduction, we aim to model the joint spread of popularity of the content and the content itself. Pertaining to the content, each node has a current state represented by two bits: the \emph{want} bit indicates whether the node is interested in the content and the \emph{have} bit indicates whether the node has the content. When the \emph{want} bit of a node is $1$, we call the node, a \emph{destination} (i.e., interested in the content), else we call it a \emph{relay} (i.e., not yet interested in the content). 

There is a central server that keeps track of the \emph{want} state of all the nodes. The total number of destinations (nodes interested in the content) is a measure of popularity of the content. In order to promote the content, this measure of popularity must be made known to all the nodes. We assume that there is a low bitrate control channel that (i) is used by the nodes to inform the central server about their interest in the content, and (ii) is used by the central server to broadcast, at regular intervals, the fraction of nodes interested in the content. Thus, for the purpose of spread of content popularity, the network is fully connected. 

Relay nodes, on receiving popularity broadcasts from the central server, might get converted to destinations according to an \emph{influence spread model}. We model this influence process (evolution of interest in the content) using the Homogeneous Influence Linear Threshold (HILT) model, a special case of the Linear Threshold model (LT) introduced in Kempe et al.\ \cite{kempe-etal03max-spread-infl}. For the content copying process, we model the random contacts between pairs of nodes as independent Poisson point processes (a model also used in the context of mobile P2P in \cite{queija-etal08scaling-laws-p2p}), with the copying (when pairs of nodes meet) being controlled probabilistically, in a manner similar to the Susceptible-Infective (SI) epidemic model \cite{allen94dicrete-time-SIR}.

\subsection{Modeling Interest Evolution: The HILT Model}
\label{sec:HILT-model}
In this section we introduce the HILT model used to model the evolution of interest in the content. In the original LT model \cite{kempe-etal03max-spread-infl}, nodes are part of a weighted directed graph $\mathcal{G}=(\mathcal{N},\mathcal{E})$, where $\mathcal{E} \subseteq \mathcal{N} \times \mathcal{N}$. With each ${i,j} \in \mathcal{E}$, there is associated a weight $w_{i,j}$ which gives a measure of \emph{influence} of node $i$ on node $j$, normalized such that the total weight into any node is at most 1, i.e., $\sum_{i} w_{i,j} \leq 1$. The Homogeneous Influence Linear Threshold (HILT model) is a special case of the LT model where the network graph is complete and all nodes are homogeneous. Hence, we have a mesh network on the population $\mathcal{N}$ containing $N=|\mathcal{N}|$ nodes with each edge carrying the same influence weight $\gamma_N=\frac{\Gamma}{N-1}$ and $\Gamma \leq 1$ (see Figure~\ref{fig:hilt}). 
\begin{figure}
\centering
\includegraphics[scale=0.3]{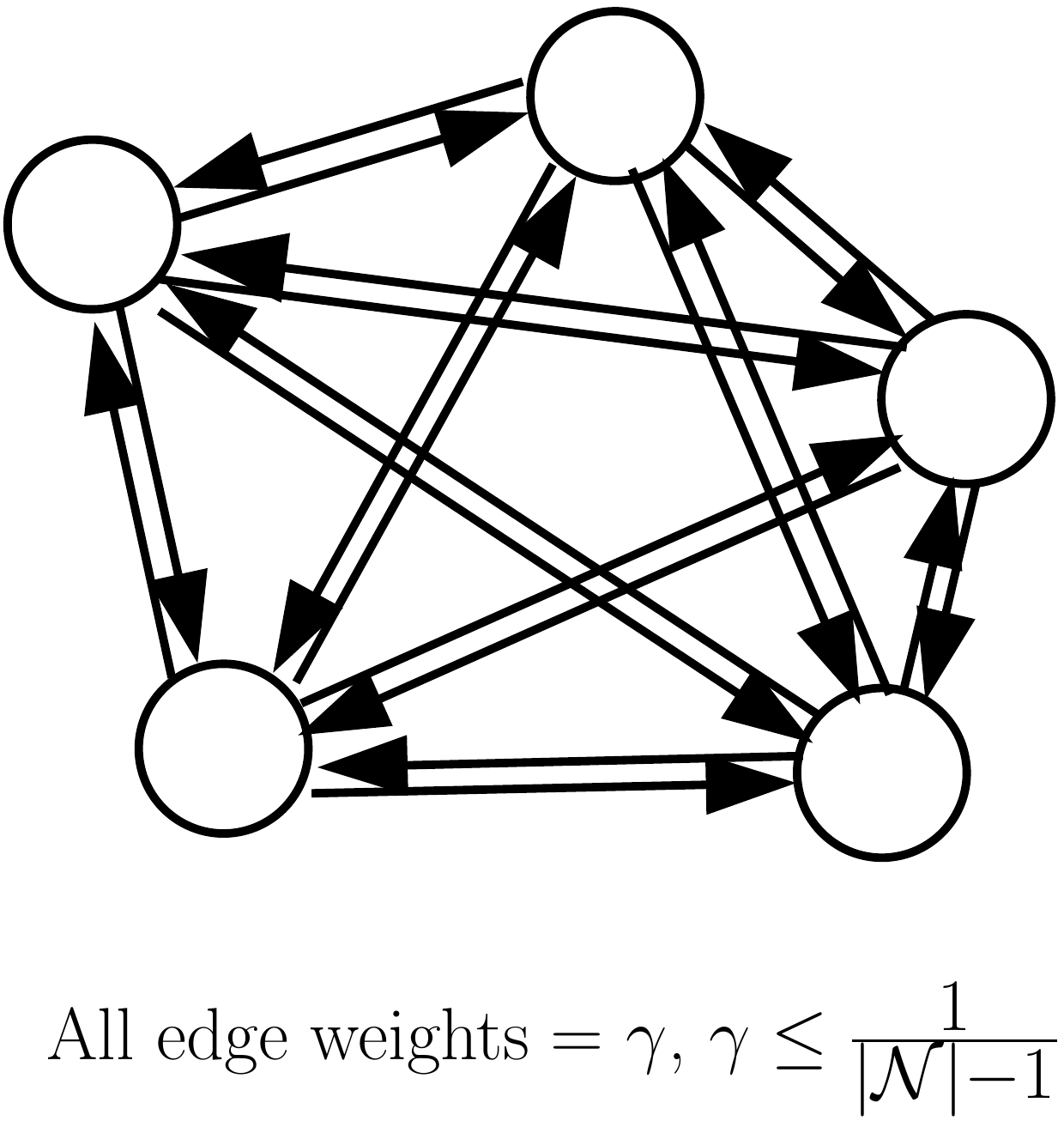}
\vspace{-0.3cm}
\caption{The HILT Network Model}
\label{fig:hilt}
\vspace{-0.3cm}
\end{figure}

The evolution of interest in the content is modeled by the following influence process evolving in discrete time. Each node $j \in \mathcal{N}$ independently chooses a random threshold $\Theta_{j}$ from a given distribution $F$ \emph{at the beginning}. We begin with an initial set of destinations $\mathcal{A}(0)$.  In the HILT model, the net influence of a set of destinations on any relay is $\gamma_N$ times the size of the destination set. The number of destinations (currently interested nodes) is broadcast to all the nodes by the central server after each period, and a relay gets converted into a destination once the net influence exceeds its chosen threshold. We also assume the \emph{progressive case}, i.e., conversion into destinations is an irreversible process.  In other words, a relay $j \notin \mathcal{A}(k-1)$ gets influenced in step $k$ if, 
\begin{eqnarray}
\label{eqn:HILT-activation}
 \gamma_N | \mathcal{A}(k-1) |  \geq \Theta_{j} 
\end{eqnarray}
The influence is modeled as building up cumulatively over time. The initial set $\mathcal{A}(0)$ is viewed as being \emph{infectious} and results in some of the relay nodes ``tipping'' over their thresholds and getting converted to destinations in the first period. A node $j$ that remains a relay  has the level of cumulative influence on it raised to $\gamma_N |\mathcal{A}(0)| (< \Theta_j)$. The nodes in $\mathcal{A}(0)$ are now viewed as being non-infectious, and the newly infected nodes are denoted by $\mathcal{D}(1)$, with $\mathcal{A}(1) = \mathcal{A}(0) + \mathcal{D}(1)$; see Figure~\ref{fig:spread_influence}. Thus, at the end of each period the population will contain three types of nodes: $\mathcal{A}(k)$, the set of non-infectious destinations, $\mathcal{D}(k)\subseteq \mathcal{A}(k)$, the set of newly infected destinations in that period (infectious for the next period) and the set of relays denoted by $\mathcal{S}(k)$. It is clear that the activation process stops at a random step $T$ when there are no more infectious destinations, i.e., $\mathcal{D}(T) = \emptyset$, and a \emph{terminal set} $\mathcal{A}(T)$ is reached. 

\begin{figure}
\centering
\includegraphics[scale=0.45]{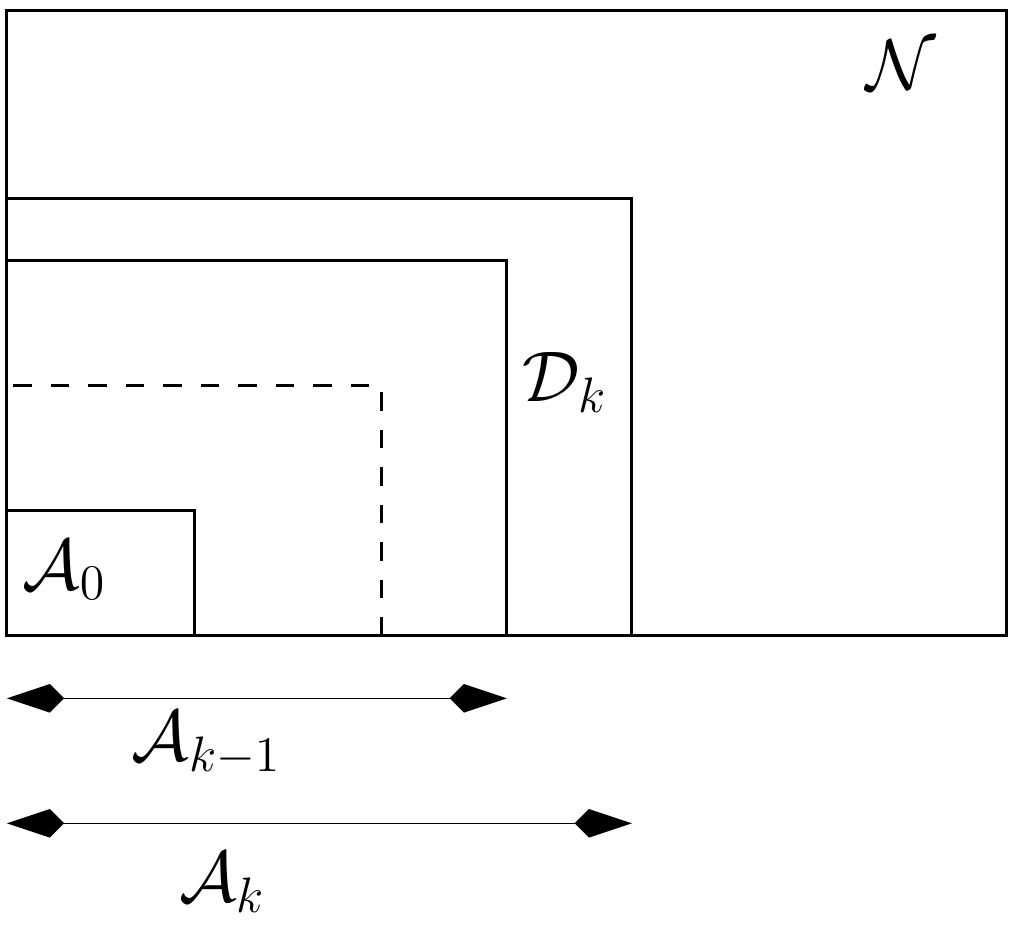}
\vspace{-0.3cm}
\caption{Spread of Influence in the HILT Model}
\label{fig:spread_influence}
\vspace{-0.5cm}
\end{figure}

\subsection{Modeling Content Copying: SI Model}
\label{sec:si-model}
In order to model the content delivery process, we further classify the nodes depending on whether they have the content (i.e., based on the \emph{have} bit). Let $\mathcal{X}(k) \subseteq \mathcal{A}(k)$ denote the set of destinations that have the content, and $\mathcal{Y}(k) \subseteq \mathcal{S}(k)$ denote the set of relays that have the content.

For the evolution of $(\mathcal{X}(k), \mathcal{Y}(k))$, we model a content copying process based on the Susceptible-Infective (SI model). Between the discrete time steps, pairs of nodes meet at the points of a Poisson process with rate $\lambda$, and whenever a node that has the content meets a node that does not, content transfer takes place in a probabilistic manner. At such a meeting, we distinguish between the node that does not have the content, being a destination or a relay, by having different copy probabilities $\alpha_N = \frac{\alpha}{N}$ and $\sigma_N = \frac{\sigma}{N}$, respectively. The scaling with respect to $N$ is to ensure that the fluid limit infection rates would be $\alpha$ and $\sigma$ respectively \cite{allen94dicrete-time-SIR}. 

\subsection{Co-evolution of Interest and Spread: The HILT-SI Model}
\label{sec:hilt-si-model}
In the combined model, underlying the content delivery process (the SI model), the influence process (the HILT model) converts relay nodes into destinations. Thus, in our setup, the fraction of destinations is time-varying (unlike \cite{singh-etal11dtn-multi-destination}). Also, the content spread is dependent on the interest evolution but not vice versa. Figure~\ref{fig:hilt-si} shows the possible transitions between the four sets of nodes. 
\begin{figure}
\centering
\includegraphics[scale=0.3]{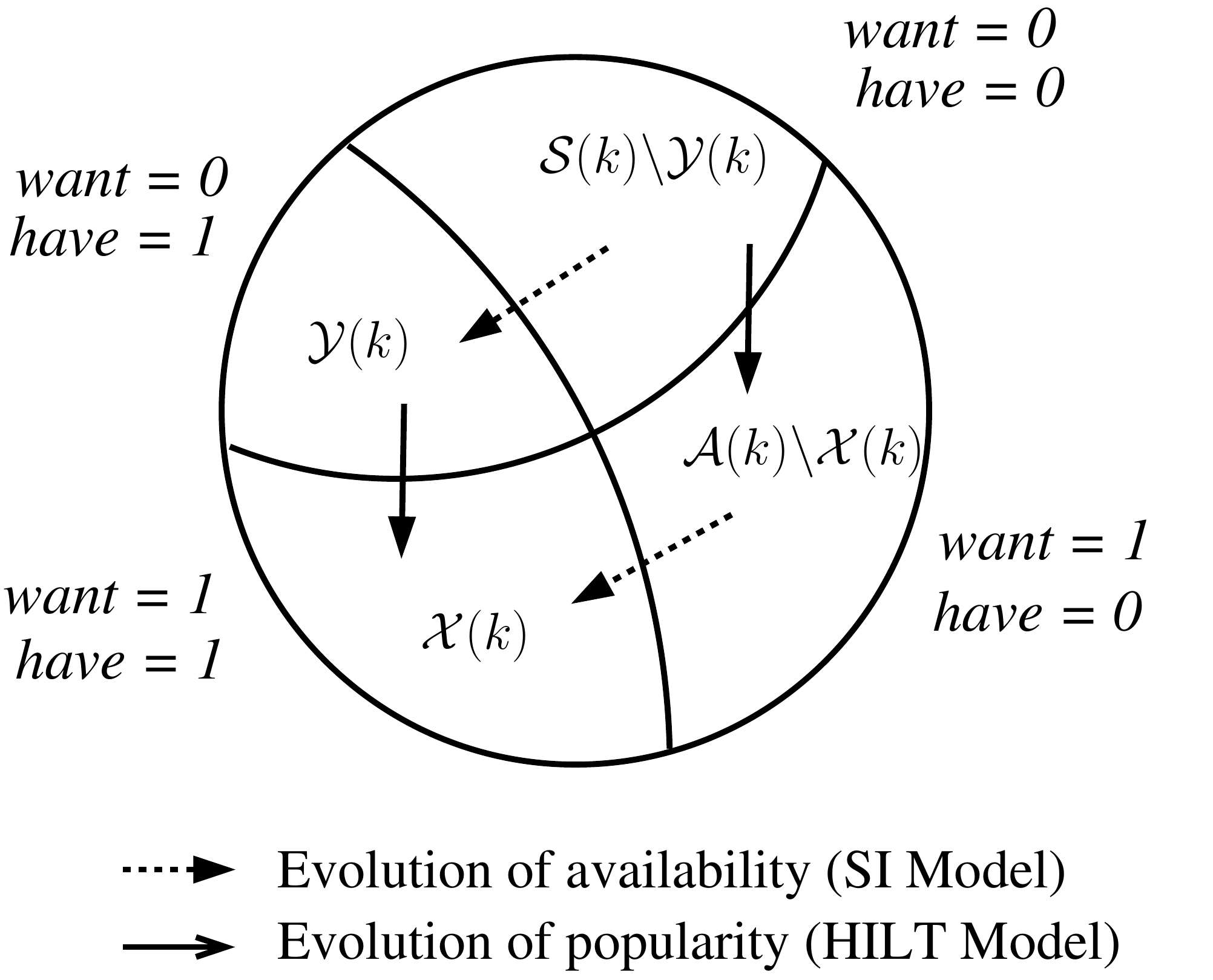} 
\vspace{-0.3cm}
\caption{Joint evolution of content popularity and spread in the HILT-SI Model.}
\label{fig:hilt-si}
\vspace{-0.5cm}
\end{figure}
An interesting feature of this model is the importance of copying to a relay node. As a content provider, we might be interested in delivering only to the destinations (interested in the content). But, there are two advantages of copying to a relay. First, copy to a relay promotes the further spread of the content even to destinations; this is the aspect explored in a controlled Markov process setting in \cite{singh-etal11dtn-multi-destination}. Second, the relay we copy to now might later get influenced (by the HILT model) to become a destination, which is a unique feature of the HILT-SI model. 

\section{Evolution of Interest in the Content}
\label{sec:HILT-fluid}
In this section and in Section V, we will use the convergence of a scaled discrete time Markov chain (DTMC) to a deterministic limit described by an ordinary differential equation (o.d.e.) (Kurtz \cite{kurtz70ode-markov-jump-processes}, and also Darling \cite{darling02limits-purejump-markov}, \cite{darling-norris08ode-markov-chains}), to develop deterministic (or, so called, fluid) approximations to the  HILT model for interest evolution, and the HILT-SI model for co-evolution of interest and content spread. 

Consider the HILT model on $N$ nodes, and with edge weights $\gamma_N = \frac{\Gamma}{N-1}$. Recall that $\Theta_i$ is the influence threshold of User~$i$, $1 \leq i \leq N$. The $\Theta_i, 1 \leq i \leq N,$ are non-negative, i.i.d.\ random variables chosen according to the cumulative distribution function $F(\cdot)$, a continuous distribution with density function $f(\cdot)$.  We see that $\Gamma$ and the threshold distribution $F$ together define the HILT model. They can be seen as modeling the susceptance level of the population to being influenced by the popularity of the content under consideration. 

Note that $\mathcal{A}(k), k \geq 1$ represents the set of destinations and $\mathcal{D}(k)$ the set of \emph{newly added} destinations (infectious) at the end of period $k$ with $\mathcal{D}(0) = \mathcal{A}(0)$. Define $\mathcal{B}(k) = \mathcal{A}(k-1)$  ,  the set of destination nodes up to period $k-1$ with $\mathcal{B}(0) = \emptyset$. Thus, for $k \geq 1$, $\mathcal{B}(k) = \cup_{0 \leq i \leq k-1} \mathcal{D}_i$. We will be working with sets $\mathcal{B}(k)$ and $\mathcal{D}(k)$ to derive the fluid limits of the HILT process. Also, since the nodes are homogeneous in the HILT model, it suffices to record the sizes of the respective sets, and not on the exact membership of the sets themselves.  Let $A(k)$, $D(k)$ and $B(k)$ be the sizes of the sets $\mathcal{A}(k)$, $\mathcal{D}(k)$ and $\mathcal{B}(k)$ respectively. 
 
\subsection{O.D.E.\ Model for Interest Evolution}
\label{sec:ODE_interest}
We can show that the original HILT process $(B(k), D(k))$ is a Markov chain (see Appendix~\ref{app:HILT-markov}). In order to obtain an approximating o.d.e., we work with an appropriately scaled Markov process $(B^{N}(t), D^{N}(t))$, which evolves on a time scale $N$ times faster than that of the original system. We can visualize this process as evolving over ``minislots'' of duration $1/N$, whereas the original process evolves at the epochs $k=0,1,2,\cdots$. The minislots are indexed by $t=0,1,2,\cdots$. Since this new process runs on a faster time scale, we slow down its dynamics by utilizing an approach taken in \cite{benaim-leboudec08mean-field-models}. In the present context, the scaling approach can be interpreted as follows. In each minislot, each infectious destination in $D^{N}(t)$ is permitted to influence the relays with probability $\frac{1}{N}$ and its influence is deferred with probability $1-\frac{1}{N}$. In the former case, it contributes its influence of $\frac{\Gamma}{N-1}$ and then moves to the set $B^{N}(t+1)$, otherwise it stays in the $D^{N}(t+1)$ set. In the original process, the influence of all the newly infected destinations will be taken into account by the central server while announcing the popularity level of the content in the next time step, whereas, in the scaled process, only those infectious destinations that choose to use their influence will be considered. Define by $C^N(t) \subseteq D^N(t)$ the set of infectious destinations who use their influence at time $t$. Then, 
\[ C^{N}(t) = \frac{D^{N}(t)}{N} + Z^{N}_{b}(t+1)\]
\[ B^{N}(t+1) = B^{N}(t) + C^{N}(t) \]
and by applying Equation~\ref{eqn:HILT-activation} to the relay nodes $j \in \mathcal{N} \backslash \mathcal{A}(k)$, 
\begin{eqnarray*}
\lefteqn{D^{N}(t+1) = D^{N}(t) - C^{N}(t)}  \\
\hspace{-2cm} &+& \mathbb{E} \bigg[ \frac{F(\gamma_N(B^{N}(t) + C^N(t)))- F(\gamma_N B^{N}(t))}{1-F(\gamma_N B^{N}(t))} \bigg] \times \\
& & \bigg(N-B^{N}(t) -D^{N}(t)\bigg) + Z^{N}_{d}(t+1)
\end{eqnarray*}
where $Z^{N}_b(t+1)$ and $Z^{N}_d(t+1)$ are zero mean random variables conditioned on the history of the process $(B^{N}(0), D^{N}(0), \cdots B^{N}(k), D^{N}(k))$ representing the noise terms, and the expectation in the expression for $D^{N}(t+1)$ is with respect to $C^N(t)$. Define $\Tilde{B}^{N}(t) = \frac{B^{N}(t)}{N}$ and similarly $\Tilde{C}^{N}(t)$ and $\Tilde{D}^{N}(t)$ to denote the fraction of users of each type (destination and relay). $(\Tilde{B}^{N}(t), \Tilde{D}^{N}(t))$, $t=0,1,2,\cdots$ is a DTMC on the state space $[0, \frac{1}{N}, \frac{2}{N}, \cdots 1] \times [0, \frac{1}{N}, \cdots 1]$ with $\Tilde{B}^{N}(t)+\Tilde{D}^{N}(t)\leq 1$. 

Then we can state the following theorem on the convergence of the scaled process to their deterministic limits.
\begin{theorem}
\label{thm:kurtz-theorem-hilt}
Given the interest evolution Markov process $( \Tilde{B}^{N}(t), \Tilde{D}^{N}(t) )$, with bounded $\dot{f}(\cdot)$ for the threshold distribution,  we have for each $T > 0$ and each $\epsilon > 0$,
\begin{eqnarray*}
\mathbb{P} \bigg( \sup_{0 \leq u \leq T} \big| \big| \big( \Tilde{B}^{N}( \lfloor Nu \rfloor ),\Tilde{D}^{N}( \lfloor Nu \rfloor ) \big) - \big( b(u),d(u) \big) \big| \big| > \epsilon \bigg)\\
\stackrel{N\rightarrow \infty}{\rightarrow} 0
\end{eqnarray*}
where $(b(u),d(u))$ is the unique solution to the o.d.e.,
\[ \dot{b} = d\]
\[ \dot{d} = \frac{f(\Gamma b)\Gamma d }{1- F(\Gamma b)} (1-b-d) - d \]
with initial conditions $(b(0)=0,d(0)=a(0))$.
\end{theorem}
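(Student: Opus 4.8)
The plan is to invoke the standard convergence theorem of Kurtz for density-dependent Markov chains, in the form stated by Darling and Norris for scaled discrete-time chains, for which it suffices to (i) extract the conditional one-step drift of the scaled chain and show that, after rescaling time by $N$, it converges uniformly to a Lipschitz vector field, (ii) show that the aggregated noise is asymptotically negligible, and (iii) run a Gronwall estimate to pass from one-step control to uniform control on $[0,T]$. Writing $\Delta\tilde B^N(t)=\tilde B^N(t+1)-\tilde B^N(t)$ and likewise for $\tilde D^N$, I would first decompose each increment into its conditional mean given the history $\mathcal H_t=(\tilde B^N(s),\tilde D^N(s))_{s\le t}$ and a martingale-difference remainder carrying the noise terms $Z^N_b,Z^N_d$. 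The target vector field is $G(b,d)=(G_1,G_2)=\big(d,\ \tfrac{f(\Gamma b)\Gamma d}{1-F(\Gamma b)}(1-b-d)-d\big)$.

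For the drift, the first component is immediate: since $Z^N_b$ is mean zero, $\mathbb E[\Delta\tilde B^N(t)\mid\mathcal H_t]=\tilde D^N(t)/N$, so $N\,\mathbb E[\Delta\tilde B^N(t)\mid\mathcal H_t]=\tilde D^N(t)=G_1$. The second component is where the hypothesis on the threshold distribution enters. Since $C^N(t)$ is a count of order $O(1)$ (a thinning of $D^N(t)$ by probability $1/N$) while $\gamma_N=\Gamma/(N-1)$, the argument shift $\gamma_N C^N(t)$ is of order $1/N$; a first-order Taylor expansion of the CDF gives
\[ F\big(\gamma_N(B^N+C^N)\big)-F(\gamma_N B^N) = f(\gamma_N B^N)\,\gamma_N C^N + R^N, \]
with $|R^N|\le\tfrac12\|\dot f\|_\infty(\gamma_N C^N)^2=O(1/N^2)$, this being exactly the place where boundedness of $\dot f$ is used. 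Taking the conditional expectation (which replaces $C^N$ by its mean $\tilde D^N$), multiplying the leading term by the factor $(N-B^N-D^N)=N(1-\tilde B^N-\tilde D^N)$, dividing by $N$ to pass to fractions, and finally rescaling time by $N$, the deterministic part converges to $G_2(b,d)$ while the remainder contributes $O(1/N)\to0$ uniformly; here $\gamma_N B^N\to\Gamma b$ and continuity of $f,F$ close the identification of the drift.

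Next I would verify the analytic hypotheses needed to run the machinery. The field $G$ must be Lipschitz on the region actually visited; the only danger is the factor $1/(1-F(\Gamma b))$, the hazard rate of the threshold law, which blows up if $F(\Gamma b)\to1$. I would therefore fix a compact set $K\subset\{(b,d):b,d\ge0,\ b+d\le1,\ F(\Gamma b)\le1-\delta\}$ on which boundedness of $f,\dot f$ and $1-F(\Gamma b)\ge\delta>0$ make $G$ Lipschitz, and then either assume the threshold support exceeds $1$ (so $F(\Gamma b)<1$ throughout $[0,1]$ and $K$ is global) or work up to the stopping time at which the trajectory exits $K$ — in the latter case exit corresponds to the interest process approaching its terminal set. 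Picard's existence and uniqueness theorem then yields $(b(u),d(u))$, and I would note that $\{b+d\le1\}$ is invariant since $\frac{d}{du}(b+d)=G_1+G_2=0$ on $b+d=1$.

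The remaining and, I expect, genuinely technical step is to control the accumulated noise. I would bound the conditional variances $\mathbb E[(Z^N_b)^2\mid\mathcal H_t]$ and $\mathbb E[(Z^N_d)^2\mid\mathcal H_t]$: each minislot flips $O(N)$ nodes with per-node probabilities $O(1/N)$, so the count-increment variance is $O(N)$ and the fraction-increment variance is $O(1/N)$. Summing over the $\lfloor Nu\rfloor=O(N)$ minislots gives the rescaled martingale $M^N(u)=\sum(\text{increment}-\text{conditional mean})$ a quadratic variation of order $O(1/N)$, and Doob's maximal inequality yields $\mathbb P(\sup_{u\le T}\|M^N(u)\|>\eta)\to0$. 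Finally, writing $e^N(u)=(\tilde B^N,\tilde D^N)(\lfloor Nu\rfloor)-(b,d)(u)$ as the sum of the drift-approximation error, the martingale $M^N$, and the term $\int_0^u[G(\tilde B^N,\tilde D^N)-G(b,d)]\,ds$ bounded by $L\int_0^u\|e^N\|\,ds$, the discrete Gronwall inequality gives $\sup_{u\le T}\|e^N(u)\|\le(\text{vanishing terms})\,e^{LT}\to0$ in probability. The main obstacle is thus the interplay between the singular hazard-rate denominator — forcing the localization to $K$ or the stopping-time argument — and the uniform $o(1/N)$ control of the Taylor remainder via bounded $\dot f$; once those are handled, the Kurtz--Gronwall conclusion is routine.
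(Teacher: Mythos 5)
Your proposal is correct and follows essentially the same route as the paper: identify the rescaled one-step drift via a first-order Taylor expansion of $F$ (with the remainder killed by boundedness of $\dot f$, exactly as in the paper's Appendix C), verify that the limiting field is Lipschitz, control the accumulated noise martingale by a maximal inequality (the paper uses Burkholder where you use Doob, an immaterial difference), and close with Gronwall --- this is precisely the paper's Theorem on equivalent Kurtz conditions applied in its Appendix F.1. Two remarks. First, your treatment of the hazard denominator $1/(1-F(\Gamma b))$ by localizing to a compact set $K$ with $1-F(\Gamma b)\ge\delta$ (or assuming the threshold support exceeds the reachable influence range) is actually \emph{more} careful than the paper, which simply asserts boundedness of the Jacobian of $(g_1,g_2)$ on $[0,1]\times[0,1-b]$ without addressing the possible blow-up of $h_F(\Gamma b)$ as $F(\Gamma b)\to 1$. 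Second, there is a bookkeeping slip in your variance accounting: with $O(N)$ nodes each flipping with probability $O(1/N)$ per minislot, the count-increment variance is $O(1)$ (not $O(N)$) and the fraction-increment variance is $O(1/N^2)$ (not $O(1/N)$); note that your stated per-step orders, summed over $\lfloor Nu\rfloor=O(N)$ minislots, would give quadratic variation $O(1)$ rather than the $O(1/N)$ you assert. The corrected orders do yield your claimed $O(1/N)$ quadratic variation, matching the paper's required condition $\mathbb{E}\|\xi^{(N)}(i)\|^2\le c_0/N^2$, so the conclusion stands once the intermediate orders are fixed.
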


\begin{proof}
This is essentially an instance of Kurtz's Theorem \cite{kurtz70ode-markov-jump-processes}; see also \cite{darling02limits-purejump-markov}, \cite{darling-norris08ode-markov-chains}. In Appendix~\ref{app:kurtz} we derive equivalent conditions for verifying Kurtz's theorem. Refer Appendix~\ref{app:hilt-fluidlimit} for the steps involved in deriving the fluid limit. A detailed proof verifying the necessary conditions for Kurtz's theorem for the HILT case is provided in Appendix~\ref{app:kurtz-applied-hilt}.
\end{proof}
\textit{Remark:}
We know that the hazard function\cite{cox61renewal-theory} for the cdf $F(x)$ is given by $h_F(x) = \frac{f(x)}{1-F(x)}$ where $f(x)$ is the corresponding probability density function. Hence the o.d.e. becomes,
\begin{eqnarray} 
\label{eqn:ODE_F_1}
\dot{b} &=& d  \\
\label{eqn:ODE_F_2}
 \dot{d} &=& h_F(\Gamma b) \Gamma d  (1-b-d) - d 
\end{eqnarray}
\subsection{Accuracy of the O.D.E.\ Approximation}
Consider the HILT process of interest evolution, under the special case of uniform distribution of influence thresholds, i.e., $\Theta_i \sim U[0,1]$. The hazard function corresponding to uniform distribution is given by \[h_F(x) = \frac{1}{1-x}\] The corresponding o.d.e.s then become:
\begin{eqnarray}
\label{eqn:ODE_unif_1}
\dot{b} &=& d\\
\label{eqn:ODE_unif_2}
\dot{d} &=& \frac{\Gamma d}{1-\Gamma b}  (1-b-d) - d 
\end{eqnarray}
Figure ~\ref{plot:hilt-convergence} shows the convergence of the scaled HILT process $(\Tilde{B}^N(\lfloor Nt \rfloor), \Tilde{D}^N(\lfloor Nt \rfloor))$ to the solutions of the above o.d.e. $(b(t), d(t))$ for increasing values of $N$ (50,100,500,1000). $\Gamma$ and $d_0 $ were chosen to be $0.9$ and $0.2$ respectively. We see thatfor  $N= 500, 1000$ the o.d.e. approximates the scaled HILT Markov chain very well.

The original HILT process of interest evolution $(B(k), D(k))$ is then compared with the solutions of the o.d.e. $(b(t), d(t))$. The results are shown in Figure ~\ref{plot:kurtz-errorbar}, where multiple sample paths of the original discrete time HILT process are superimposed on the o.d.e. solutions. We find that the o.d.e. solution approximates the original process really well, and hence permits using the fluid limit approximation for sufficiently large $N$.
\begin{figure}
\centerline{\includegraphics[scale=0.3]{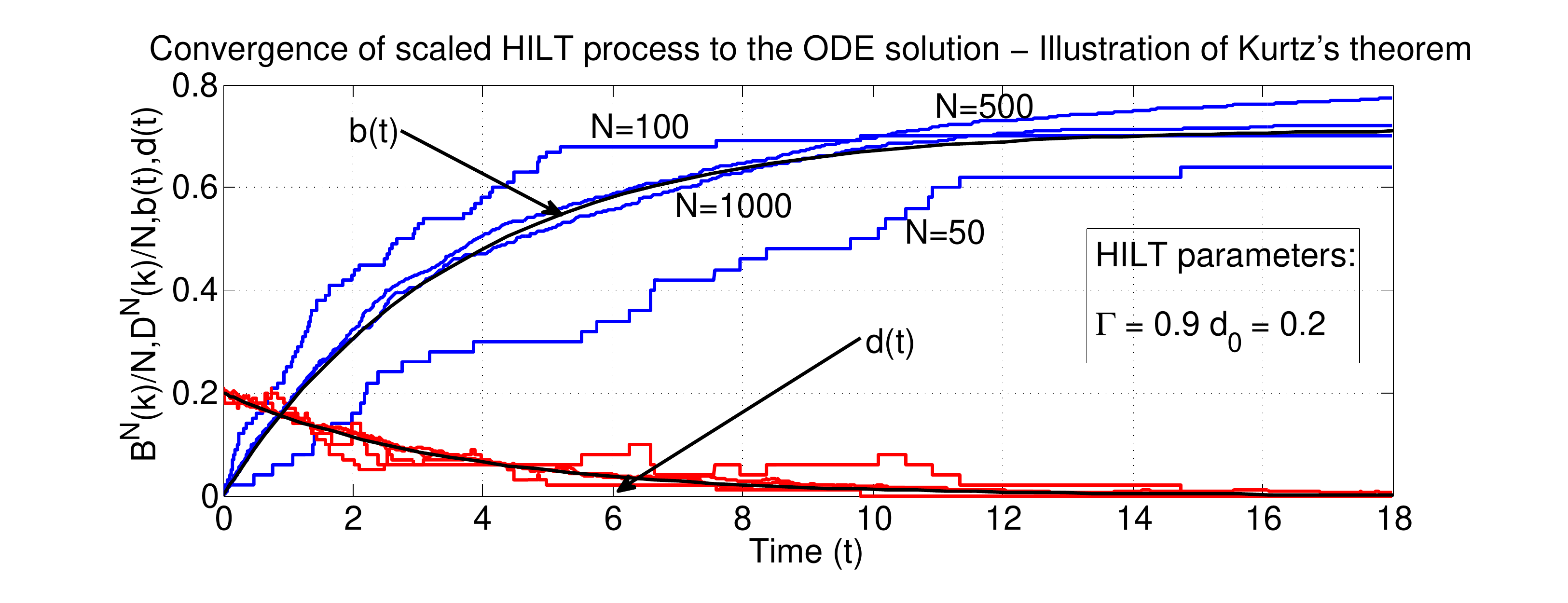}}
\vspace{-0.3cm}
\caption{Convergence of the scaled HILT Markov chain $(\Tilde{B}^{N}(k), \Tilde{D}^{N}(k))$ to the o.d.e. limit $(b(t), d(t))$. The o.d.e. solution $(b(t),d(t))$ plotted along with sample paths of the scaled process $(\Tilde{B}^{N}(k), \Tilde{D}^{N}(k))$ for $N=50,100,500,1000$.}
\label{plot:hilt-convergence}
\vspace{-0.5cm}
\end{figure}

\begin{figure}
\centerline{\includegraphics[scale=0.3]{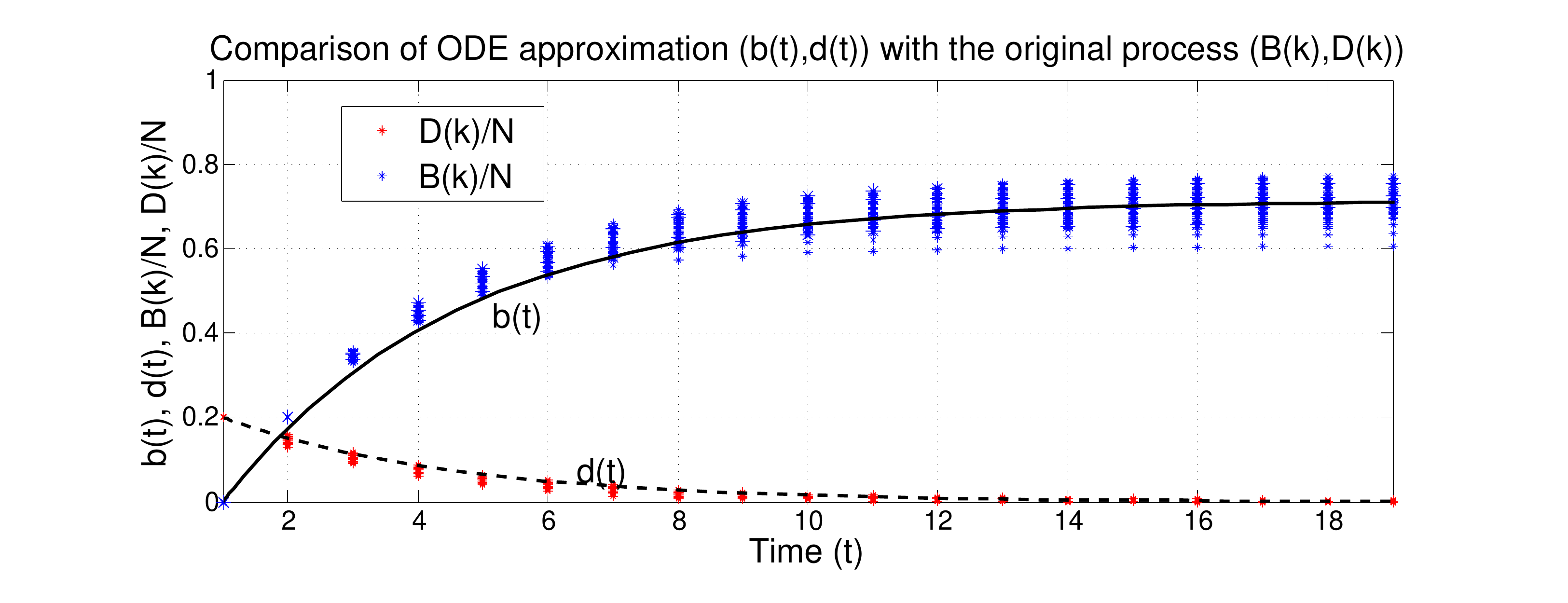}}
\vspace{-0.3cm}
\caption{Multiple sample paths of the unscaled HILT process $(B(k),D(k))$ for $N=3000$ is plotted along with the o.d.e. solution $(b(t),d(t))$ for $\Gamma = 0.9$ and $d_0=0.2$}
\label{plot:kurtz-errorbar}
\vspace{-0.5cm}
\end{figure}

\subsection{Effect of the Threshold distribution}
\label{sec:HILT-distribution}
In the HILT model, while $\Gamma$ is indicative of the total level of influence each individual can receive from the others, the threshold distribution $F(\cdot)$ captures the variation among the individuals' susceptance levels for getting interested the content. An empirical analysis on the effect of threshold distributions on collective behavior is available in \cite{granovetter78threshold-models}. Having established the o.d.e. limit and studied its ability to approximated the evolution of interest process, in this section we exploit the o.d.e. limit to study the effect of the threshold distribution.

\subsubsection{Uniform Distribution}
For the HILT model of interest evolution under the uniform threshold distribution, we can state the following theorem.

\begin{theorem}
Given the initial fraction of destinations $d_0$, in an HILT network with influence weight $\Gamma$ under uniform distribution, define $r = 1 - \Gamma + \Gamma d_0$. Then,
\begin{enumerate}
 \item The fluid limit for the evolution of interest is given by,
\begin{eqnarray}
\label{eqn:unif_explicit_1}
b(t) &=& \frac{d_0}{r} - \frac{d_0}{r} e^{-rt} \\
\label{eqn:unif_explicit_2}
d(t) &=& d_0 e^{-rt} 
\end{eqnarray}
\item The final fraction of destinations is given by 
\begin{eqnarray}
\label{eqn:d_0_b_infty}
b_\infty=\frac{d_0}{1-\Gamma+\Gamma d_0}
\end{eqnarray}
\end{enumerate}
\end{theorem}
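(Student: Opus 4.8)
The plan is to avoid integrating the coupled nonlinear system (\ref{eqn:ODE_unif_1})--(\ref{eqn:ODE_unif_2}) head-on, and instead to uncover a conserved quantity that decouples the dynamics. Because Theorem~\ref{thm:kurtz-theorem-hilt} already guarantees a \emph{unique} solution of the o.d.e., it suffices to exhibit one pair $(b(t),d(t))$ that satisfies the system together with the initial conditions $b(0)=0$, $d(0)=d_0$; the closed forms (\ref{eqn:unif_explicit_1})--(\ref{eqn:unif_explicit_2}) are then justified.

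First I would introduce the total destination fraction $a(t)=b(t)+d(t)$ and add the two equations. Using $1-b-d=1-a$, the two $-d$ terms cancel and one is left with the simpler relation $\dot{a}=\frac{\Gamma d}{1-\Gamma b}(1-a)$. The key observation is that, since $\dot{b}=d$, the factor $\frac{\Gamma d}{1-\Gamma b}$ equals $-\frac{d}{dt}\log(1-\Gamma b)$, while dividing through by $1-a$ gives $\frac{\dot a}{1-a}=-\frac{d}{dt}\log(1-a)$. Equating these and integrating from $0$ shows that $\frac{1-\Gamma b}{1-a}$ is constant along trajectories; evaluating at $t=0$ (where $b=0$, $a=d_0$) fixes the constant to $\frac{1}{1-d_0}$, yielding the invariant
\[ 1-a = (1-d_0)(1-\Gamma b). \]

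With this invariant in hand I would solve for $d$: substituting $a=b+d$ gives $d = a-b = d_0 + \big(\Gamma(1-d_0)-1\big)b = d_0 - r b$, where $r=1-\Gamma+\Gamma d_0$. Since $\dot b = d$, the second-order system collapses to the single scalar linear o.d.e. $\dot b = d_0 - r b$ with $b(0)=0$, which integrates immediately to $b(t)=\frac{d_0}{r}\big(1-e^{-rt}\big)$; differentiating recovers $d(t)=\dot b(t)=d_0 e^{-rt}$, exactly (\ref{eqn:unif_explicit_1})--(\ref{eqn:unif_explicit_2}). For the terminal value in part~(2) I would let $t\to\infty$: since $\Gamma\le 1$ forces $r=1-\Gamma(1-d_0)>0$, the exponential vanishes and $b_\infty=\frac{d_0}{r}=\frac{d_0}{1-\Gamma+\Gamma d_0}$, which is (\ref{eqn:d_0_b_infty}). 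As a consistency check, $b_\infty$ can also be read straight off the invariant by setting $d_\infty=0$, so $a_\infty=b_\infty$, and solving $1-b_\infty=(1-d_0)(1-\Gamma b_\infty)$.

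The main obstacle here is one of discovery rather than of technical difficulty: spotting the conserved ratio $\frac{1-\Gamma b}{1-a}$ that linearizes the problem. Once it is identified, the remainder is a scalar linear o.d.e. and elementary bookkeeping. The only side condition I would need to confirm is that $1-\Gamma b$ stays positive over the relevant range of $b$ (it equals $\frac{1-\Gamma}{r}\ge 0$ at $b=b_\infty$), so that the hazard-rate factor and the logarithms used above remain well defined throughout the trajectory.
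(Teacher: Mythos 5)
Your proposal is correct, and at its mathematical core it produces exactly the same object as the paper's own proof (Appendix~\ref{app:ode-solving}), reached by a different route. The paper divides the two equations to obtain $\frac{\mathrm{d}d}{\mathrm{d}b}=\frac{\Gamma-1-\Gamma d}{1-\Gamma b}$, separates variables, and integrates to the linear relation $\Gamma-1-\Gamma d=c_1(1-\Gamma b)$; your conserved ratio $\frac{1-a}{1-\Gamma b}=1-d_0$ is algebraically the same first integral (multiply your invariant by $\Gamma$ and rearrange to get $1-\Gamma+\Gamma d=r(1-\Gamma b)$, i.e.\ the paper's relation with $c_1=-r$). The differences are in discovery and in the final reduction: you sum the equations so the $-d$ terms cancel and recognize two logarithmic derivatives, then collapse to the scalar linear equation $\dot b=d_0-rb$ and differentiate to recover $d$; the paper instead differentiates its invariant in $t$ to get $\dot d=c_1 d$, solves $d$ as an exponential, and integrates for $b$. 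Your packaging buys two things the paper leaves implicit: the explicit appeal to uniqueness from Theorem~\ref{thm:kurtz-theorem-hilt}, so that exhibiting one solution suffices even where the logarithmic manipulations degenerate (e.g.\ $d_0=1$, where $1-a\equiv 0$), and the check that $1-\Gamma b$ remains positive along the trajectory so the hazard factor and the logarithms are well defined --- the paper's separation-of-variables step tacitly assumes both. One small imprecision: $\Gamma\le 1$ alone does not force $r>0$ (taking $\Gamma=1$ and $d_0=0$ gives $r=0$); since $r\ge\max(1-\Gamma,\Gamma d_0)$, you need $d_0>0$ or $\Gamma<1$, which is in any case the only regime where the statement is non-vacuous.
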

\begin{proof}
The first part of the theorem is obtained by explicitly solving the o.d.e.s for the HILT model under uniform distribution (Equations~\ref{eqn:ODE_unif_1} and \ref{eqn:ODE_unif_2}) for the initial conditions $b(0)=0,d(0)=d_0$(see Appendix~\ref{app:ode-solving}).
\end{proof} 

Linear Threshold model under uniform distribution has been studied in discrete setting for general networks in \cite{kempe-etal03max-spread-infl}. Since in the HILT model all nodes are homogeneous, we will be interested in the influence of a set of size $k$. Consider the HILT network with $N$ nodes and influence weight $\gamma_N$. Let $I_{\gamma_N}(k)$ be the expected size of the \emph{terminal set of destinations} $\mathcal{A}(T)$, starting with $\mathcal{A}_0$ of size $k$ as the initial set of destinations. By using results from \cite{srini-kumar11LT-model-ncc}, we can show that, 
\[I_{\gamma_N}(k) = k[1 + (N-k)\gamma_N [ 1 + (N-k-1)\gamma_N[ 1 + \cdots \]

In the expression for $I_{\gamma_N}(k)$, noting $\lim_{N \rightarrow \infty} N \gamma_N = \Gamma$ and $d_0 = \frac{k}{N}$, we can show that as $N \rightarrow \infty$, $\frac{I_{\gamma_N}(k)}{N} \rightarrow b_\infty = \frac{d_0}{r}$ (refer Appendix~\ref{app:hilt-convergence}). This reconfirms the fact that the fluid model is consistent with the discrete formulation. Also, while \cite{srini-kumar11LT-model-ncc} allows us to compute only the final fraction of destinations, our current work provide a good approximation of the actual trajectories of the influence process.

\begin{figure}[t]
\centerline{\includegraphics[scale=0.3]{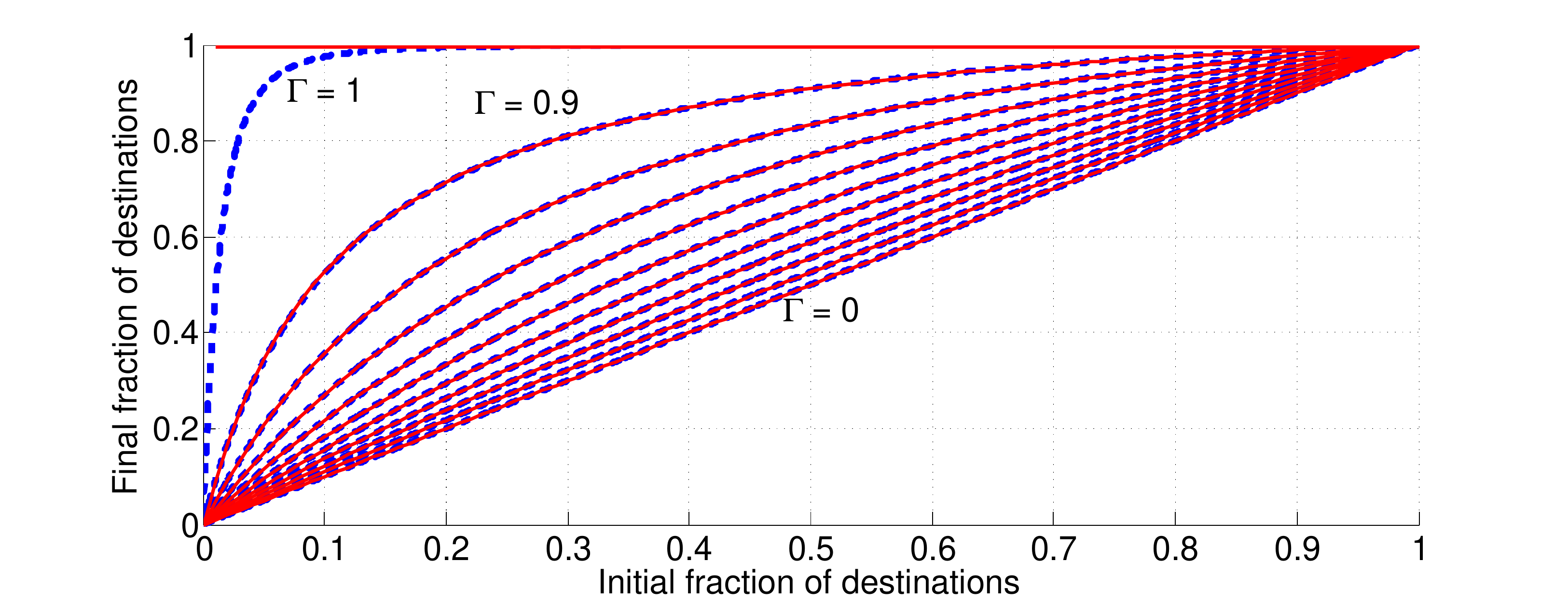}}
\vspace{-0.3cm}
\caption{Final fraction of destinations as predicted by the fluid limit $b_\infty$ (bold lines) and by the discrete formulation $\frac{I_{\gamma_N}(k)}{N}$ (dashed lines) plotted for various values of $\Gamma$ and $d_0$.}
\label{plot:hilt_k_3000}
\vspace{-0.5cm}
\end{figure}

\paragraph*{Remarks} Figure~\ref{plot:hilt_k_3000} shows the behaviour of $b_\infty$ for various of $\Gamma$ and $d_0$. Plotted alongside are the values of $\frac{I_{\gamma_N}(k)}{N}$ for the corresponding values of $\gamma_N$ and $k$ for $N=3000$ and it can be seen that the two solutions match really well. It is easy to note that when $\Gamma=0$, the HILT process does not have any effect on the number of destinations, i.e., the fraction of destinations and relays in the population remains a constant, similar to \cite{singh-etal11dtn-multi-destination}. We observe from the fluid limit that, as long as $\Gamma < 1$ we cannot influence the entire population (i.e., $b_\infty < 1$) unless we start off with the entire population to be destinations (i.e., $d_0 =1$). But if $\Gamma = 1$, then the influence of the destinations are at the maximum, and we can ultimately convert all nodes into destination, given we start with a non-zero fraction of destinations, i.e., $b_\infty =1$ provided $d_0 > 0$. 

\subsection{Exponential Distribution}
For the uniform distribution the hazard rate function is $h_F(x) = \frac{1}{1-x}$ which is monotonically increasing in $x$. Such a hazard rate implies that the population consists of relay nodes are more susceptible to getting interested in the content as the total number of destinations increases (i.e., total accumulated influence over the past increases). The exponential distribution with parameter $\Lambda$ yields $h_F(x)= \Lambda$, a constant hazard rate function. This implies a \emph{memoryless} property for the influence process, i.e., the relay nodes are equally likely to get influenced at a given time instant, irrespective of the net accumulated influence in the past. We can then state the following theorem.

\begin{theorem}
 Given the initial fraction of destinations $d_0$, in an HILT network with influence weight $\Gamma$ under exponential distribution with parameter $\Lambda$,
\begin{enumerate}
 \item The fluid limit of the evolution of interest is given by the solution to the o.d.e.,
\begin{eqnarray*}
 \dot{b} &=& d\\
 \dot{d} &=& \Lambda \Gamma d(1-b-d) - d 
\end{eqnarray*}
 \item The final fraction of destinations is the solution to the transcendental equation, 
\[ b_\infty =  1 - (1-d_0) e^{-\Lambda \Gamma b_\infty}\]
\end{enumerate}
\end{theorem}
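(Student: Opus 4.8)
The plan is to handle the two parts separately: part (1) follows immediately by specializing Theorem~\ref{thm:kurtz-theorem-hilt}, while part (2) is obtained by a phase-plane reduction of the resulting o.d.e.\ together with the stopping condition of the interest process.

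For part (1), I would simply substitute the hazard rate into the general fluid limit. The remark following Theorem~\ref{thm:kurtz-theorem-hilt} (Equations~\ref{eqn:ODE_F_1} and~\ref{eqn:ODE_F_2}) writes the o.d.e.\ as $\dot b = d$ and $\dot d = h_F(\Gamma b)\,\Gamma d\,(1-b-d) - d$. Since the exponential distribution with parameter $\Lambda$ has constant hazard rate $h_F(x) = \Lambda$, as already noted in the text preceding the statement, putting $h_F(\Gamma b) = \Lambda$ yields exactly the claimed system $\dot b = d$, $\dot d = \Lambda \Gamma d(1-b-d) - d$.

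For part (2), the idea is to eliminate time and follow the trajectory in the $(b,d)$ plane. Dividing the two equations gives $\frac{dd}{db} = \dot d / \dot b = \Lambda \Gamma(1-b-d) - 1$, valid while $d > 0$, which rearranges into the linear first-order o.d.e.\ $\frac{dd}{db} + \Lambda \Gamma\, d = \Lambda\Gamma(1-b) - 1$. I would solve this with the integrating factor $e^{\Lambda \Gamma b}$; the right-hand side integrates in closed form (an exponential times an affine function), and after simplification the general solution collapses neatly to $d = (1-b) + C e^{-\Lambda \Gamma b}$. Imposing the initial condition $d = d_0$ at $b = 0$ fixes $C = -(1-d_0)$, so that along the trajectory $d(b) = (1-b) - (1-d_0) e^{-\Lambda \Gamma b}$. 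The terminal fraction $b_\infty$ is then characterized by the stopping condition of the activation process: the spread halts when there are no more infectious destinations, i.e., when $d = 0$. Setting $d(b_\infty) = 0$ gives $1 - b_\infty = (1-d_0) e^{-\Lambda \Gamma b_\infty}$, which is precisely the stated transcendental equation $b_\infty = 1 - (1-d_0) e^{-\Lambda \Gamma b_\infty}$.

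The integrating-factor algebra is routine and is not the real difficulty; the point that needs care is justifying that $d = 0$ is the correct terminal condition and that the resulting equation has a well-defined root in $(0,1)$. For the latter, I would set $g(x) := 1 - (1-d_0)e^{-\Lambda \Gamma x} - x$ and observe that $g(0) = d_0 > 0$ while $g(1) = -(1-d_0)e^{-\Lambda\Gamma} < 0$; since $g''(x) = -(1-d_0)(\Lambda\Gamma)^2 e^{-\Lambda \Gamma x} < 0$, the map $g$ is concave, so its values cross zero exactly once on $(0,1)$. This gives a unique relevant $b_\infty$, and in particular $b_\infty < 1$, consistent with the observation made for the uniform case that the whole population is not influenced unless $d_0 = 1$.
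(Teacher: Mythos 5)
Your proposal is correct, and part (1) matches the paper exactly: both substitute the constant hazard rate $h_F(x)=\Lambda$ into the general fluid limit (Equations~\ref{eqn:ODE_F_1}--\ref{eqn:ODE_F_2}), which is immediate. For part (2) you take a genuinely different route: the paper observes that the resulting system is the SIR epidemic with infection rate $\Lambda\Gamma$ and unit recovery rate (with $b$ playing the role of the recovered fraction and $d$ the infected fraction) and then simply cites the classical final-size relation from \cite{kermack-mckendrick27SIR-epidemics}, whereas you re-derive that relation from scratch by eliminating time via $\frac{\mathrm{d}d}{\mathrm{d}b}=\Lambda\Gamma(1-b-d)-1$, solving the linear equation with integrating factor $e^{\Lambda\Gamma b}$, and imposing $d=0$ at termination. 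Your algebra checks out: the integral of $[\Lambda\Gamma(1-b)-1]e^{\Lambda\Gamma b}$ does collapse to $(1-b)e^{\Lambda\Gamma b}$, giving $d(b)=(1-b)-(1-d_0)e^{-\Lambda\Gamma b}$ and hence the stated transcendental equation. This mirrors the structure of the paper's own Appendix~\ref{app:ode-solving} treatment of the uniform case and makes the theorem self-contained rather than citation-dependent; a marginally slicker variant of the same idea is to track the susceptible fraction $s=1-b-d$, for which $\frac{\mathrm{d}s}{\mathrm{d}b}=-\Lambda\Gamma s$ yields $s(b)=(1-d_0)e^{-\Lambda\Gamma b}$ in one line, with $1-b_\infty=s(b_\infty)$ at termination --- this is precisely the classical computation the paper invokes. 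Your concavity argument ($g(0)=d_0>0$, $g(1)<0$ when $d_0<1$, $g''<0$, and the superlevel set of a concave function is an interval) establishes existence and uniqueness of the root in $(0,1)$, a point the paper never addresses. The only step you flag but do not fully close is that the trajectory actually terminates at that first root; one sentence suffices: $b(t)$ is nondecreasing and bounded, hence converges to some $b^\star$, and if $d(b^\star)>0$ then $\dot b$ would be bounded away from zero, contradicting convergence, so $b^\star$ is the first zero of $d(\cdot)$. With that sentence added, your proof is complete and somewhat more rigorous than the paper's.
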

\begin{proof}
The first part is obtained by substituting $h_F(x)= \Lambda$ in the HILT o.d.e. (Equations~\ref{eqn:ODE_F_1} and \ref{eqn:ODE_F_2}). This is equivalent to the SIR epidemic model with infection rate $\lambda \Gamma$ and recovery rate of 1 \cite{daley-gani99epidemic-modeling}. Note that $b(t)$ is then equivalent to the Recovered set (R), and $d(t)$ is equivalent to the Infected set (I). The second part follows from classic SIR literature \cite{kermack-mckendrick27SIR-epidemics} and observing that the basic reproduction number (expected number of new infections from a single infection) is $\Lambda \Gamma$. 
\end{proof}

\begin{figure}
\centerline{\includegraphics[scale=0.3]{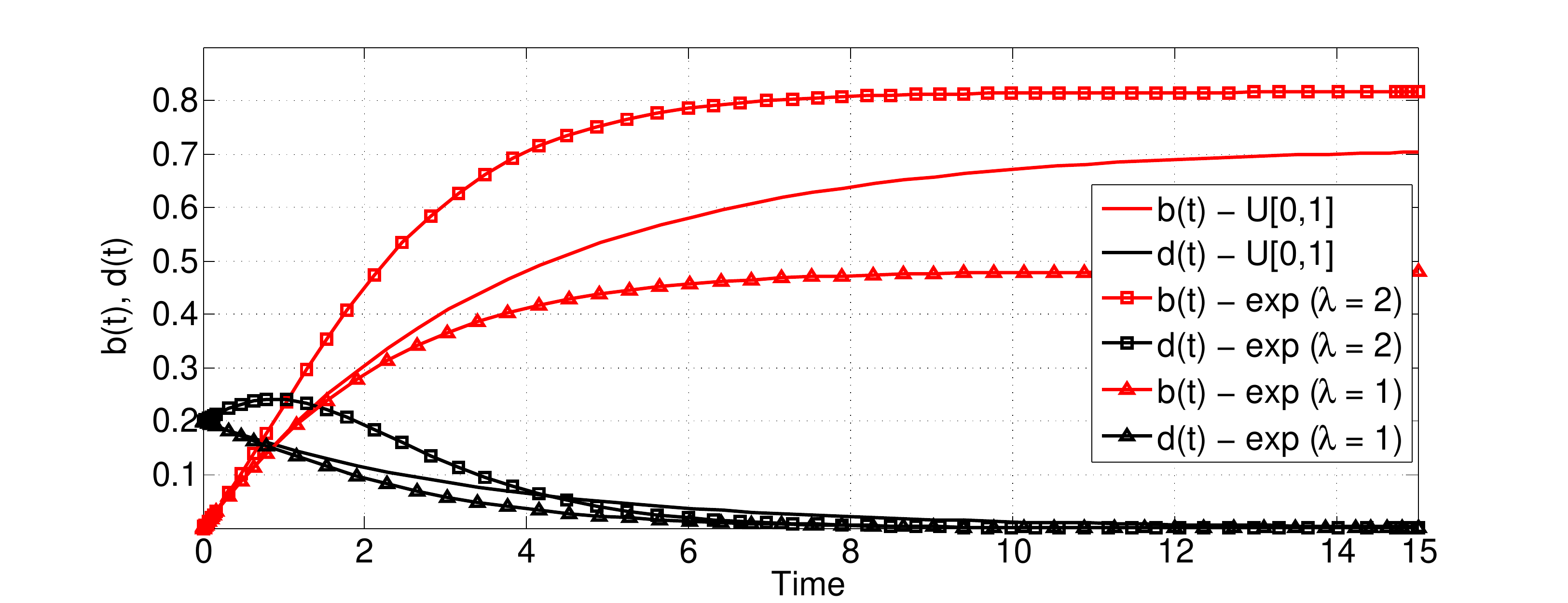}}
\vspace{-0.3cm}
\caption{Effect of threshold distribution in the HILT Model. Plotted are the interest evolution trajectories corresponding to Uniform distribution and Exponential distribution ($\lambda =1$ and $\lambda=2$).}
\label{plot:uniform-exponential}
\vspace{-0.5cm}
\end{figure}

\paragraph*{Remarks}
Figure~\ref{plot:uniform-exponential} shows the comparison of the interest evolution process $(b(t), d(t))$ with $\Gamma = 0.9$ and $d_0 = 0.2$, for uniform $[0,1]$ and exponential distribution with parameter $\Lambda = 1, 2$. Observe that, when the expected values of the thresholds are the same (uniform $[0,1]$ and exponential with $\Lambda =2$), the exponential distribution results in a higher value of $b_\infty$, since there will be a larger fraction of relay nodes with low sampled threshold. Also since $\Lambda \Gamma > \frac{1}{1-d_0}$, $\dot{d}(0) > 0$ and in SIR terminology\cite{kermack-mckendrick27SIR-epidemics} a \emph{proper outbreak} is said to have occurred. As $\Lambda$ decreases, we see that the expected influence thresholds of the relay nodes increase, and hence we notice a significant drop in the interest evolution process.

\section{Joint Evolution of Interest and Spread of the Content}
\label{sec:HILT-SI-fluid}
In the previous sections we studied the evolution of interest in the content. In this section, we shall adopt similar techniques to obtain the joint evolution of interest and spread of the content (HILT-SI model) in mobile P2P setting. Recall that interest evolution is modeled by the HILT model and the actual content transfer is modeled by a probabilistic copying process similar to the SI epidemic model. While the HILT model evolves independently, the evolution of the SI model depends upon the HILT model, since the copying process takes into account the relay/destination state of the receiving node, indicated by the \emph{want} bit of the node. In this section we will derive an o.d.e. limit for the SI part of the joint process. We shall also assume for simplicity that the thresholds in the HILT model are uniformly distributed; the analysis can be easily extended to HILT with general distribution $F$.

Pairwise meetings of the nodes consitute points of a Poisson process with rate $\lambda$, and let $\alpha_N = \frac{\alpha}{N}$ and $\sigma_N = \frac{\sigma}{N}$ be the copy probabilities to the destinations (nodes that are interested in the content) and relays (nodes not yet interested in the content) respectively. Recall from Section~\ref{sec:hilt-si-model} (Figure~\ref{fig:hilt-si}) $\mathcal{X}(k)$ the set of destinations that have the content and $\mathcal{Y}(k)$ the set of relays that have the content with $X(k)$ and $Y(k)$ the sizes of these sets respectively.  Taking into account the HILT driven conversion of relays that have the content into destinations, we note that both $X_k$ and $Y_k$ can be written in terms of the   binomial random variables $P_{(.)}(k)$ and $Q_{xy}(k)$, the first due to the content copying process (SI model) and the second due to the interest evolution process (HILT model). We can write, 
\[X(k+1) = X(k) + P_x(k) + Q_{xy}(k)\]
\[Y(k+1) = Y(k) + P_y(k) - Q_{xy}(k)\]
Since the nodes meet at rate $\lambda$ over a given slot, the probability that a destination node without the content, receives the content is $1 - e^{-\lambda \alpha_N (X(k) + Y(k))}$. Similarly for a relay node, the probability would be $1 - e^{-\lambda \sigma_N (X(k) + Y(k))}$. For large $N$, we can then write,
\[ P_x(k) \stackrel{\mathtt{dist.}}{=} \mathtt{Bin} \bigg( A(k) - X(k), \lambda \alpha_N (X(k) + Y(k)) \bigg)\]
\[ P_y(k) \stackrel{\mathtt{dist.}}{=} \mathtt{Bin} \bigg(N- A(k) - Y(k),\lambda \sigma_N (X(k) + Y(k)) \bigg) \]
Note also that due to the Interest Evolution process (HILT model) some relays get converted into destinations. Hence nodes from $\mathcal{Y}_k$ (relays that have the content) may transition to $\mathcal{X}_k$ (destinations that have the content). Since we are working with the uniform distribution, the probability of such a transition for each of the relay nodes would be $\frac{\gamma_N D(k)}{1 - \gamma_N B(k)}$. We denote by $Q_{xy}(k)$ the number of relay nodes that get converted during the $k$th time slot. Then,
\[ Q_{xy}(k) \stackrel{\mathtt{dist.}}{=} \mathtt{Bin} \bigg( Y(k), \frac{\gamma_N D(k)}{1 - \gamma_N B(k)} \bigg)\]
Hence we can write the evolution of the processes $X(k)$ and $Y(k)$ as follows.
\begin{eqnarray*}
X(k+1) - X(k) &=& \lambda \alpha_N (X(k) + Y(k)) (A(k) - X(k)) \\
&+& \frac{\gamma_N D(k)}{1 - \gamma_N B(k)} Y(k) + Z_x(k)\\ 
\end{eqnarray*}
\begin{eqnarray*}
Y(k+1) - Y(k) &=& \lambda \sigma_N (X(k) + Y(k))(N- A(k) - Y(k)) \\
&-& \frac{\gamma_N D(k)}{1 - \gamma_N B(k)} Y(k)+ Z_y(k)
\end{eqnarray*}
where $Z_x(k)$ and $Z_y(k)$ are zero mean noise variables respectively for the $X_k$ and $Y_k$ processes respectively. 
In order to obtain the fluid limit, we work with the scaled process which evolves over mini-slots of width $\frac{1}{N}$. In each mini slot, the probability that a destination node without the content, receives the content will then be $1 - e^{-\frac{\lambda \alpha_N}{N} (X(k) + Y(k))}$ and a similar scaling occurs for the relay case. The interest evolution scaling in the $Q_{xy}(k)$ term is similar to the probabilistic scaling adopted in Section~\ref{sec:ODE_interest}. We can then write the scaled processes $X^N(t)$ and $Y^N(t)$ and the corresponding fractional processes $\Tilde{X}^N(t)$ and $\Tilde{Y}^N(t)$ (see Appendix~\ref{app:hilt-si-fluidlimit} for details) and state the following theorem, along similar lines to Theorem~\ref{thm:kurtz-theorem-hilt}.

\begin{theorem}
\label{thm:kurtz-theorem-hilt-si}
Given the joint evolution Markov process $( \Tilde{B}^{N}(t), \Tilde{D}^{N}(t), \Tilde{X}^{N}(t), \Tilde{Y}^{N}(t) )$, we have for each $T > 0$ and each $\epsilon > 0$,
\begin{eqnarray*}
\mathbb{P} \bigg( \sup_{0 \leq u \leq T} \big| \big| \big( \Tilde{B}^{N}( \lfloor Nu \rfloor ),\Tilde{D}^{N}( \lfloor Nu \rfloor ) , \Tilde{X}^{N}( \lfloor Nu \rfloor ),\Tilde{Y}^{N}( \lfloor Nu \rfloor ) \big) \\
- \big( b(u),d(u),x(u),y(u) \big) \big| \big| > \epsilon \bigg) \stackrel{N\rightarrow \infty}{\rightarrow} 0
\end{eqnarray*}
where $(b(u),d(u),x(u),y(u))$ is the unique solution to the o.d.e.,
\begin{eqnarray}
\label{eqn:HILT-SI_1}
 \dot{b} &=& d\\
\label{eqn:HILT-SI_2}
 \dot{d} &=& \frac{\Gamma d}{1 - \Gamma b} (1-b-d) - d\\
\label{eqn:HILT-SI_3}
 \dot{x} &=& \lambda \alpha (x+y) (a-x)  + \frac{\Gamma d}{1 - \Gamma b} y \\
\label{eqn:HILT-SI_4}
 \dot{y} &=& \lambda \sigma (x+y) (1-a-y)  - \frac{\Gamma d}{1 - \Gamma b} y 
\end{eqnarray}
with initial conditions $(b(0)=0,d(0)=d_0, x(0)=x_0, y(0) = y_0)$.
\end{theorem}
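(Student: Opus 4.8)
The plan is to recognize this as another instance of Kurtz's theorem for density-dependent Markov jump processes, in exactly the mould of Theorem~\ref{thm:kurtz-theorem-hilt}, but now applied to the four-dimensional scaled chain $(\tilde{B}^N,\tilde{D}^N,\tilde{X}^N,\tilde{Y}^N)$. I would exploit the hierarchical structure of the model: the pair $(\tilde{B}^N,\tilde{D}^N)$ is autonomous (the interest evolution does not depend on the have-bits), so its drift and its limit are already furnished by Theorem~\ref{thm:kurtz-theorem-hilt}, yielding Equations~\ref{eqn:HILT-SI_1}--\ref{eqn:HILT-SI_2}. What is genuinely new is the drift of the content-copying coordinates $(\tilde{X}^N,\tilde{Y}^N)$, which I would compute directly and then feed, together with the HILT drift, into the same verification machinery reformulated in Appendix~\ref{app:kurtz}.

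First I would compute the conditional expected increments of $X^N$ and $Y^N$ over one minislot from the binomial representations $P_x$, $P_y$, $Q_{xy}$ stated above. Linearizing $1-e^{-\frac{\lambda\alpha_N}{N}(X+Y)}=\frac{\lambda\alpha}{N^2}(X+Y)+O(N^{-3})$ and multiplying by the number of susceptible destinations $A(k)-X(k)$, and performing the analogous computation for the relay copying term and for the HILT-conversion term (scaled probabilistically as in Section~\ref{sec:ODE_interest}), I would obtain that the per-minislot drifts, after the fractional scaling $\tilde{X}=X/N$, $\tilde{Y}=Y/N$ and the time change $u=t/N$, converge to
\begin{align*}
F_x(b,d,x,y) &= \lambda\alpha(x+y)(a-x) + \frac{\Gamma d}{1-\Gamma b}\,y, \\
F_y(b,d,x,y) &= \lambda\sigma(x+y)(1-a-y) - \frac{\Gamma d}{1-\Gamma b}\,y,
\end{align*}
with $a=b+d$, matching Equations~\ref{eqn:HILT-SI_3}--\ref{eqn:HILT-SI_4} (the detailed bookkeeping being deferred to Appendix~\ref{app:hilt-si-fluidlimit}). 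Together with $F_b=d$ and $F_d$ from Theorem~\ref{thm:kurtz-theorem-hilt}, this defines the limiting vector field $F=(F_b,F_d,F_x,F_y)$ on the domain $\mathcal{K}=\{(b,d,x,y): b,d,x,y\ge 0,\ b+d\le 1,\ x\le a,\ y\le 1-a\}$.

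Next I would verify the Kurtz hypotheses on $\mathcal{K}$, as reformulated in Appendix~\ref{app:kurtz}: (i) each minislot produces jumps of size $O(1/N)$ in every coordinate, so the jump measure concentrates on vanishingly small increments; (ii) the conditional drift converges uniformly on compact subsets of $\mathcal{K}$ to $F$ with an $O(1/N)$ error; and (iii) the noise variables $Z_x,Z_y$ (together with the HILT noise $Z_b,Z_d$) are martingale differences whose conditional second moments are of order $1/N$, so the accumulated noise vanishes uniformly on $[0,T]$ by Doob's inequality. These are precisely the estimates already carried out for the HILT coordinates in Appendix~\ref{app:kurtz-applied-hilt}; the copying terms merely add bounded, smooth, density-dependent rates and are handled identically.

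The main obstacle is the Lipschitz continuity of $F$, which both forces uniqueness of the o.d.e. solution and is required by Kurtz's theorem. The copying rates $\lambda\alpha(x+y)(a-x)$ and $\lambda\sigma(x+y)(1-a-y)$ are polynomial, hence Lipschitz on the bounded set $\mathcal{K}$, so the sole difficulty is the shared factor $\frac{\Gamma d}{1-\Gamma b}$, which is singular at $b=1/\Gamma$. I would dispose of this exactly as in the HILT case: since the $(b,d)$ dynamics are autonomous with $\dot b=d\ge 0$, the coordinate $b$ is nondecreasing, so on any finite horizon $[0,T]$ we have $1-\Gamma b(u)\ge 1-\Gamma b(T)=:\delta>0$ (for $\Gamma<1$ one may even take $\delta=1-\Gamma b_\infty>0$ using the explicit uniform-case value in Equation~\ref{eqn:d_0_b_infty}). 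Thus the limit trajectory stays in the compact region $\mathcal{K}_\delta=\mathcal{K}\cap\{1-\Gamma b\ge \delta\}$, on which $F$ has bounded partial derivatives and is therefore globally Lipschitz; extending $F$ to a globally Lipschitz field agreeing with it on $\mathcal{K}_\delta$ (the standard device, since the process leaves a neighbourhood of the trajectory only with vanishing probability) legitimizes the application. With $F$ Lipschitz and the jump and noise estimates in hand, Kurtz's theorem delivers both uniqueness of $(b,d,x,y)$ and the claimed uniform convergence in probability on $[0,T]$, completing the proof.
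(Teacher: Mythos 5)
Your proposal follows essentially the same route as the paper's own proof: you keep the HILT coordinates and their limit from Theorem~\ref{thm:kurtz-theorem-hilt} unaltered, compute the drifts of $\Tilde{X}^N,\Tilde{Y}^N$ from the binomial representations $P_x,P_y,Q_{xy}$ (this is exactly Appendix~\ref{app:hilt-si-fluidlimit}), and then verify the four hypotheses of the reformulated Kurtz theorem (Theorem~\ref{thm:kurtz-equivalent}) --- Lipschitz drift, uniform drift convergence, noise variance, initial conditions --- exactly as the paper does in Appendix~\ref{app:kurtz-applied-hilt-si}. Your drift linearization is correct, and your localization to $\mathcal{K}_\delta=\mathcal{K}\cap\{1-\Gamma b\ge\delta\}$ to tame the factor $\frac{\Gamma d}{1-\Gamma b}$ is actually \emph{more} careful than the paper, which simply asserts the Jacobian is uniformly bounded on the domain; note, though, that for $\Gamma<1$ the localization is unnecessary, since $b\le 1$ forces $1-\Gamma b\ge 1-\Gamma>0$ on the whole state space, and the singularity at $b=1/\Gamma$ lies outside it --- your device earns its keep only in the boundary case $\Gamma=1$.

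There is, however, one step that fails as literally written: your noise estimate. You claim the martingale differences have conditional second moments of order $1/N$ and then invoke Doob's inequality; but over the horizon $[0,T]$ there are $\lfloor NT\rfloor$ minislots, so per-step second moments of order $1/N$ accumulate to $O(T)$, and the maximal inequality then gives a bound that does not vanish. The condition actually required by Theorem~\ref{thm:kurtz-equivalent} is $\mathsf{E}\left(\|\xi^{(N)}(i)\|^2\,|\,\mathcal{F}_i\right)\le c_0/N^2$ for the \emph{fractional} process, which is what the paper verifies in Appendix~\ref{app:kurtz-applied-hilt-si}: the unscaled count increments are binomials with per-minislot success probabilities $O(1/N)$ over $O(N)$ trials (e.g., $Q_{xy}$ has variance at most $Y^N(k)\cdot\frac{\gamma_N D^N(k)/N}{1-\gamma_N B^N(k)}=O(1)$, and similarly for the copying terms since $\lambda\alpha_N(X+Y)/N\le\lambda\alpha/N$), so their variances are bounded by constants, and dividing by $N^2$ after the fractional scaling yields the needed $O(1/N^2)$; the total accumulated variance is then $O(T/N)\to 0$ and the Burkholder/Doob step goes through. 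Relatedly, your condition (i) --- that each minislot produces jumps of size $O(1/N)$ in every coordinate --- is not literally true (a binomially distributed number of nodes can change state in one minislot, so fractional jumps can be as large as $O(1)$); the paper's reformulation deliberately avoids any hard jump-size bound and needs only the corrected variance condition, so you should drop (i) and repair (iii) as above, after which your argument coincides with the paper's.
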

\begin{proof}
A detailed proof verifying the conditions for Kurtz's theorem to hold, are presented in Appendix~\ref{app:kurtz-applied-hilt-si}.
\end{proof}
\subsection{Accuracy of the Fluid limit}
Figure~\ref{plot:want_have_scaled_ODE} shows the convergence of the scaled HILT-SI $(\Tilde{B}^N(\lfloor Nt \rfloor), \Tilde{D}^N(\lfloor Nt \rfloor), \Tilde{X}^N(\lfloor Nt \rfloor), \Tilde{Y}^N(\lfloor Nt \rfloor))$ process to the solutions of the above o.d.e $(b(t), d(t), x(t), y(t))$ for increasing values of $N=100,500,1000$. We see that for $N=1000$ the o.d.e approximates the scaled HILT-SI Markov chain very well.

The original HILT-SI process $(B(k), D(k), X(k), Y(k))$ is then compared with the solutions of the o.d.e. by superimposing multiple sample paths of the original discrete time HILT-SI process on the o.d.e solutions.(see Figure~\ref{plot:want_have_original_ODE}). We find that the o.d.e solution approximates the original process really well, and permits using the fluid limit approximation for sufficiently large $N$. 

\begin{figure}
\centerline{\includegraphics[width=9cm,height=4cm]{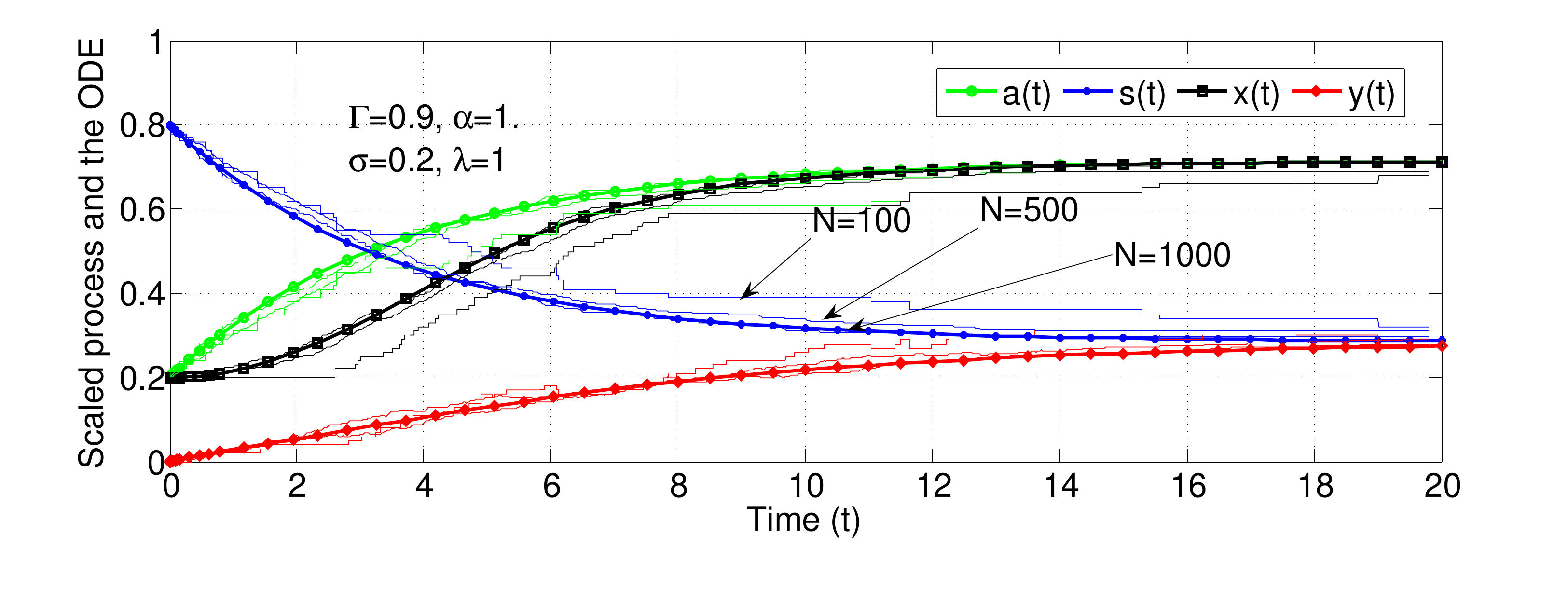}}
\vspace{-0.3cm}
\caption{Comparison of the scaled HILT-SI process for $N=50,100,1000$ with the corresponding fluid limits.}
\label{plot:want_have_scaled_ODE}
\vspace{-0.5cm}
\end{figure}

\begin{figure}
\centerline{\includegraphics[width=9cm,height=4cm]{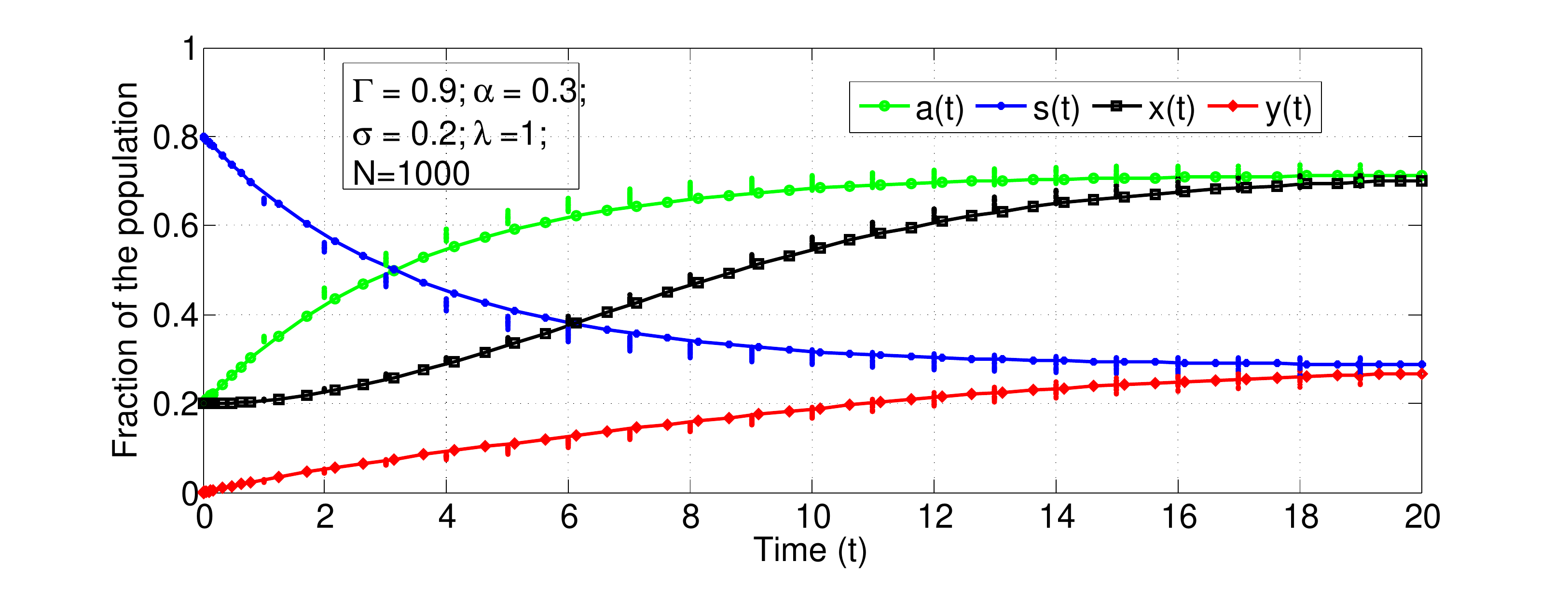}}
\vspace{-0.3cm}
\caption{Comparison of the unscaled HILT-SI process ($N=1000$) with the corresponding fluid limit.}
\label{plot:want_have_original_ODE}
\vspace{-0.5cm}
\end{figure}

\section{Numerical results}
\label{sec:numerical}
In this section we discuss the applications of the fluid limits through numerical results for various optimization problems of practical interest. While the first part of the section deals with optimizing for the interest evolution process considered in isolation, the latter part deals with optimizations for the joint evolution process. As an example, we work with HILT under uniform threshold distribution in this section.

\subsection{Interest Evolution}
Content creators are often interested in understanding the evolution of popularity of their content, and would wish to maximize the level of popularity achieved. This, in our model, is equivalent to the final fraction of destinations (nodes that are interested in receiving the content). Once the content is created, in most cases, the content creator does not have control over influence weight $\Gamma$ or the threshold distribution $F$ of the population. In order to increase the spread of interest in the content, the only parameter that can be controlled is $d_0$, the initial fraction of destinations in the population. In Section~\ref{sec:HILT-fluid}, we derived the relation between $d_0$ and $b_\infty$, the final fraction of destinations. We might be interested in choosing the right $d_0$ which can give us the required $b_\infty$, and we see that by rearranging Equation (\ref{eqn:d_0_b_infty}) we get,
\begin{eqnarray}
\label{eqn:b_infty_d_0}
 d_0 = \frac{b_\infty (1 -\Gamma)}{1 - b_\infty \Gamma} 
\end{eqnarray}
Since the o.d.e. provides a good approximation for the temporal evolution of interest, we can also obtain results for time-constrained spread of popularity, i.e., the time taken by the process for the spread of influence is also a constraint, in addition to the initial fraction of destinations.
\begin{theorem}
Given the initial fraction of destinations $d_0$ in an HILT network with parameter $\Gamma$, the time we have to wait to get the final fraction of destinations to be at least $\beta$ ($\beta < \frac{d_0}{r}$) is given by,
\[ T(\beta, d_0, \Gamma) = \frac{1}{r} \ln \bigg( \frac{1-r}{1-\frac{\beta}{d_0}r} \bigg) \]
where $r = 1 - \Gamma + \Gamma d_0$.
\end{theorem}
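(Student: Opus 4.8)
The plan is to read the waiting time directly off the closed-form trajectories already established for the uniform case, namely $b(t) = \frac{d_0}{r}(1 - e^{-rt})$ and $d(t) = d_0 e^{-rt}$ with $r = 1 - \Gamma + \Gamma d_0$ (Equations~(\ref{eqn:unif_explicit_1}) and (\ref{eqn:unif_explicit_2})). The one conceptual point is that the quantity of interest is the \emph{total} fraction of destinations present at time $t$, namely $a(t) = b(t) + d(t)$ (the non-infectious destinations $b(t)$ together with the currently infectious ones $d(t)$), and not $b(t)$ alone. So the first step is to form $a(t)$ explicitly.

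Substituting the closed forms gives
\[ a(t) = b(t) + d(t) = \frac{d_0}{r} - \frac{d_0(1-r)}{r}\, e^{-rt}. \]
A quick sanity check confirms $a(0) = d_0$ and $a(t) \to \frac{d_0}{r} = b_\infty$ as $t \to \infty$, consistent with the final-fraction result in Equation~(\ref{eqn:d_0_b_infty}). Moreover, along the trajectory
\[ \dot a = \dot b + \dot d = \frac{\Gamma d}{1-\Gamma b}\,(1-b-d) \ge 0, \]
so $a(\cdot)$ increases monotonically from $d_0$ up to its limit $\frac{d_0}{r}$. Hence for any target $\beta$ with $\beta < \frac{d_0}{r}$ there is a \emph{unique} time $T$ at which $a(T) = \beta$, and $a(t) \ge \beta$ for all $t \ge T$, which is what ``at least $\beta$'' requires.

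It then remains to invert this monotone function. Setting $a(T) = \beta$ and rearranging yields
\[ e^{-rT} = \frac{1 - \frac{\beta}{d_0}r}{1-r}, \]
and taking logarithms gives $T = \frac{1}{r}\ln\!\left(\frac{1-r}{\,1 - \frac{\beta}{d_0}r\,}\right)$, as claimed. Finally I would check well-posedness of this expression against the stated hypotheses: since $1 - r = \Gamma(1-d_0) > 0$ (for $\Gamma > 0$, $d_0 < 1$), the numerator is positive, while the denominator $1 - \frac{\beta}{d_0}r$ is positive precisely when $\beta < \frac{d_0}{r}$ — exactly the stated constraint — so the logarithm is defined and $T$ is finite; and $\beta \ge d_0$ makes $T \ge 0$. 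Because the whole argument reduces to inverting an explicit strictly monotone function, there is no genuine obstacle; the only step requiring care is the identification of the observable as $a(t) = b(t) + d(t)$, which is what produces the factor $1-r$ in the numerator rather than the $1$ one would obtain from $b(t)$ alone.
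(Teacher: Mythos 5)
Your proposal is correct and follows essentially the same route as the paper's proof: the paper likewise identifies the observable as $a(T) = b(T) + d(T)$ (since $d(T) \neq 0$ at finite time), sets $a(T) = d_0\big(\frac{1}{r} - (\frac{1}{r}-1)e^{-rT}\big) = \beta$, and rearranges to obtain $T(\beta, d_0, \Gamma)$. Your added monotonicity argument via $\dot{a} = \frac{\Gamma d}{1-\Gamma b}(1-b-d) \ge 0$ and the well-posedness checks on the logarithm are worthwhile refinements the paper leaves implicit, but they do not change the approach.
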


\begin{proof}
Firstly, note that since $a_\infty = b_\infty = \frac{d_0}{r}$, we cannot reach $\beta > \frac{d_0}{r}$. Since we are observing the process at a finite time $T$, $d(T)$ is not zero. Hence, we consider $a(T)=b(T)+d(T)$ and set it to $\beta$. We get,
\begin{eqnarray}
\label{eqn:hilt_d_0_finite_T}
a(T) = d_0 \big( \frac{1}{r} - ( \frac{1}{r}-1) e^{-rT} \big) = \beta
\end{eqnarray}
Rearranging terms,we get the expression for $T(\beta, d_0, \Gamma)$. 
\end{proof}
\begin{figure}
\centerline{\includegraphics[scale=0.3]{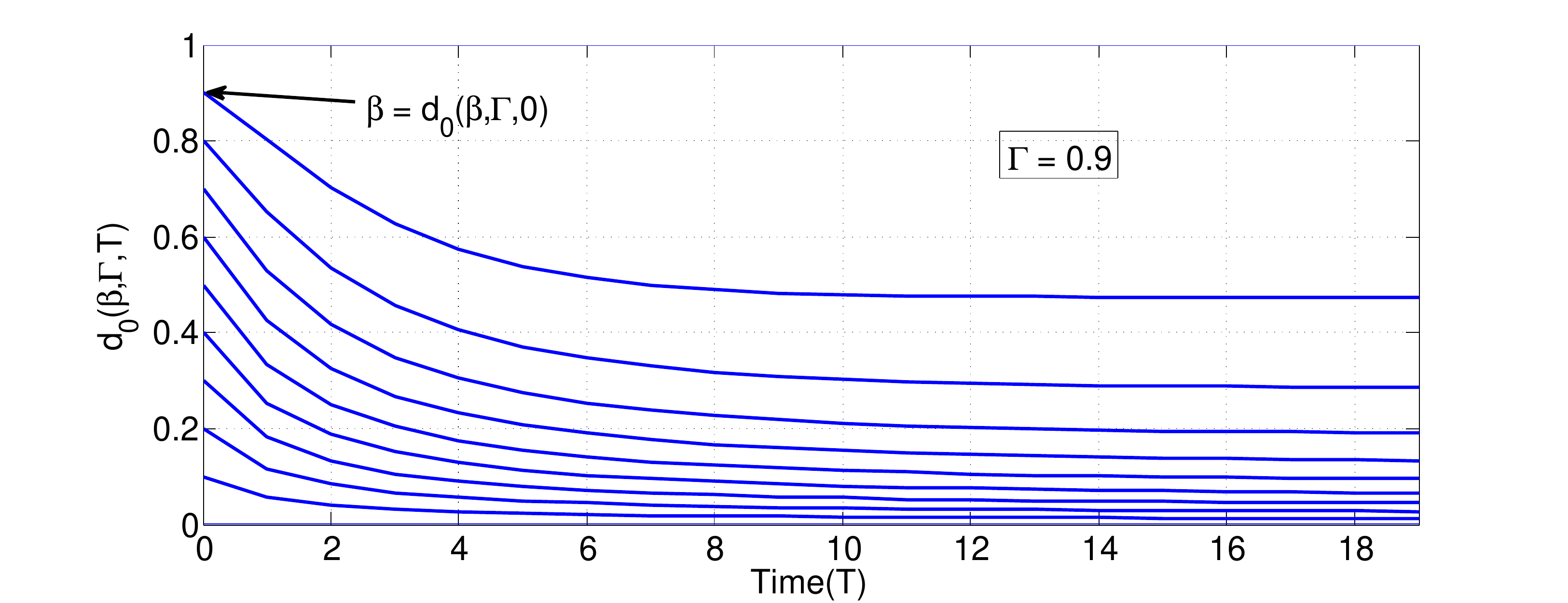}}
\vspace{-0.3cm}
\caption{Variation of $d_0^{\star}$ versus $T$ for various values of $\beta$ with $\Gamma=0.9$}
\label{plot:d_0_fixed_Gamma_varying_beta_versus_T}
\vspace{-0.5cm}
\end{figure}

\begin{figure}
\centerline{\includegraphics[scale=0.3]{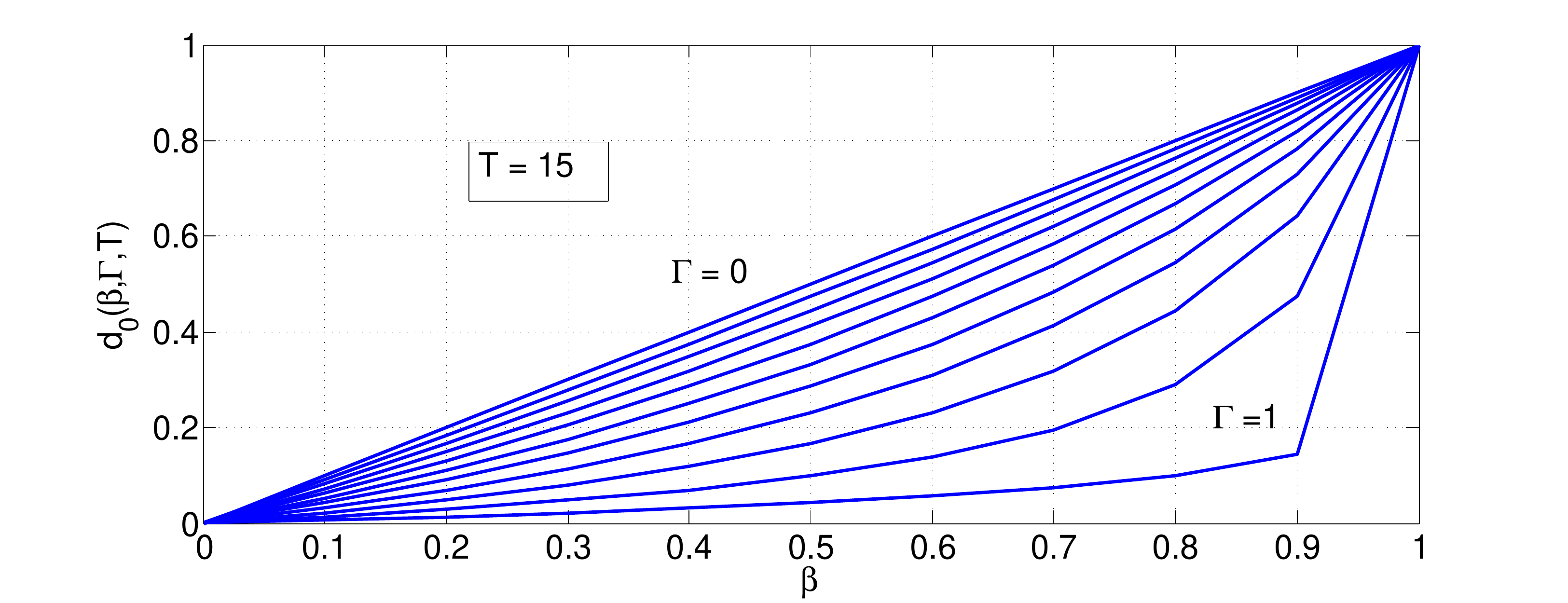}}
\vspace{-0.3cm}
\caption{Variation of $d_0^{\star}$ versus $\beta$ for various values of $\Gamma$ at $T=15$}
\label{plot:d_0_fixed_T_varying_Gamma_versus_beta}
\vspace{-0.5cm}
\end{figure}

Another interesting question would be to determine the initial fraction $d_0$ to be chosen so that by time $T$ we will have at least $\beta$ fraction of the nodes in the destination set, in the HILT network with parameter $\Gamma$. This can be solved numerically using the following fixed point equation obtained from Equation (\ref{eqn:hilt_d_0_finite_T}).
\[ e^{-rT} = \frac{1-\frac{\beta}{d_0} r}{1-r} \]
Let $H(d_0) = e^{-rT}$ and $G(d_0)=\frac{1-\frac{\beta}{d_0} r}{1-r}$. We know that $d_0^{\star}$ that solves $H(d_0)=G(d_0)$ will lie in $[\frac{\beta (1 -\Gamma)}{1 - \beta \Gamma}, 1]$ and that the solution is unique, since $a(T)$ is a monotonic function in $d_0$. We also know that for $d_0 < d_0^{\star}$, $H(d_0) > G(d_0)$ and for $d_0 > d_0^{\star}$, $H(d_0) < G(d_0)$. Under the above conditions, we can use the iterative bisection method that will converge to $d_0^{\star}$. 

The variation of $d_0^{\star}$ with respect to the parameters $\beta$, $\Gamma$ and $T$ can be seen in Figures~\ref{plot:d_0_fixed_Gamma_varying_beta_versus_T} and \ref{plot:d_0_fixed_T_varying_Gamma_versus_beta}. It is clear that,
\begin{itemize}
\item For fixed $\Gamma$ and $T$, a higher value of $\beta$ requires a higher $d_0$ (Figure~\ref{plot:d_0_fixed_T_varying_Gamma_versus_beta}). 
\item If the network has a high $\Gamma$, then it is sufficient to start with a smaller $d_0$ to reach a given $\beta$ by $T$. (Figure~\ref{plot:d_0_fixed_T_varying_Gamma_versus_beta}).
\item As the fixed time of interest $T$ is increased (the time constraint is relaxed), the required initial $d_0$ decreases, from $\beta$ at $T=0$ and finally reaches $d_0$ obtained by setting $b_\infty = \beta$ in Equation (\ref{eqn:b_infty_d_0}) (Figure~\ref{plot:d_0_fixed_Gamma_varying_beta_versus_T}).  
\end{itemize}

\subsection{Joint evolution of interest and spread}
In this section we discuss the optimization problems that might be of interest for the joint evolution process. Given that the network parameters $\Gamma$ and $d_0$ are fixed, interest in the content evolves independently. The main motive of the content provider would then be to ensure that the content is delivered to as many destinations as possible.

Recall the o.d.e.s for the HILT-SI model given by Equations~\ref{eqn:HILT-SI_1}-\ref{eqn:HILT-SI_4}. Since we intend to deliver to the destinations, it would be optimal to set $\alpha =1$. Hence the only control parameter is the copy probability to a relay, $\sigma$. This may be controlled by incentivizing or penalizing copying to relay nodes. As noted earlier, copying to a relay has two advantages, since the relay might meet a destination in the future, or might itself get converted into one, by the HILT process. But having a high $\sigma$ value, will lead to increase in the number of \emph{wasted} copies, the number of relays (not interested) that will end up having the content. The joint evolutions o.d.e.s have been simulated numerically for $\alpha=1$ and two different values of $\sigma$ and are plotted in Figure~\ref{plot:want_have_diff_sigma}. As discussed, having a higher value of $\sigma$ accelerates the $x(t)$ process, but leads to higher value of $y(t)$. Hence there is a tradeoff between achieving a high $x(t)$ while keeping $y(t)$ under bounds. We discuss two possible optimization problems for $\sigma$ that can be posed, keeping in mind this tradeoff between $x(t)$ and $y(t)$. To understand clearly the effect of $\sigma$ on the copying process, we work with $\Gamma =0.9$ and have fixed the initial fraction of destinations $d_0 = 0.2$, initial fraction of destinations that have the content $x_0 = 0.2$. This implies that the set of nodes initially interested in the content is equal to the set of nodes that have the content. We assume that the content is delivered a priori to these initial destinations by some other means. 
\begin{figure}
\centerline{\includegraphics[width=9cm,height=4cm]{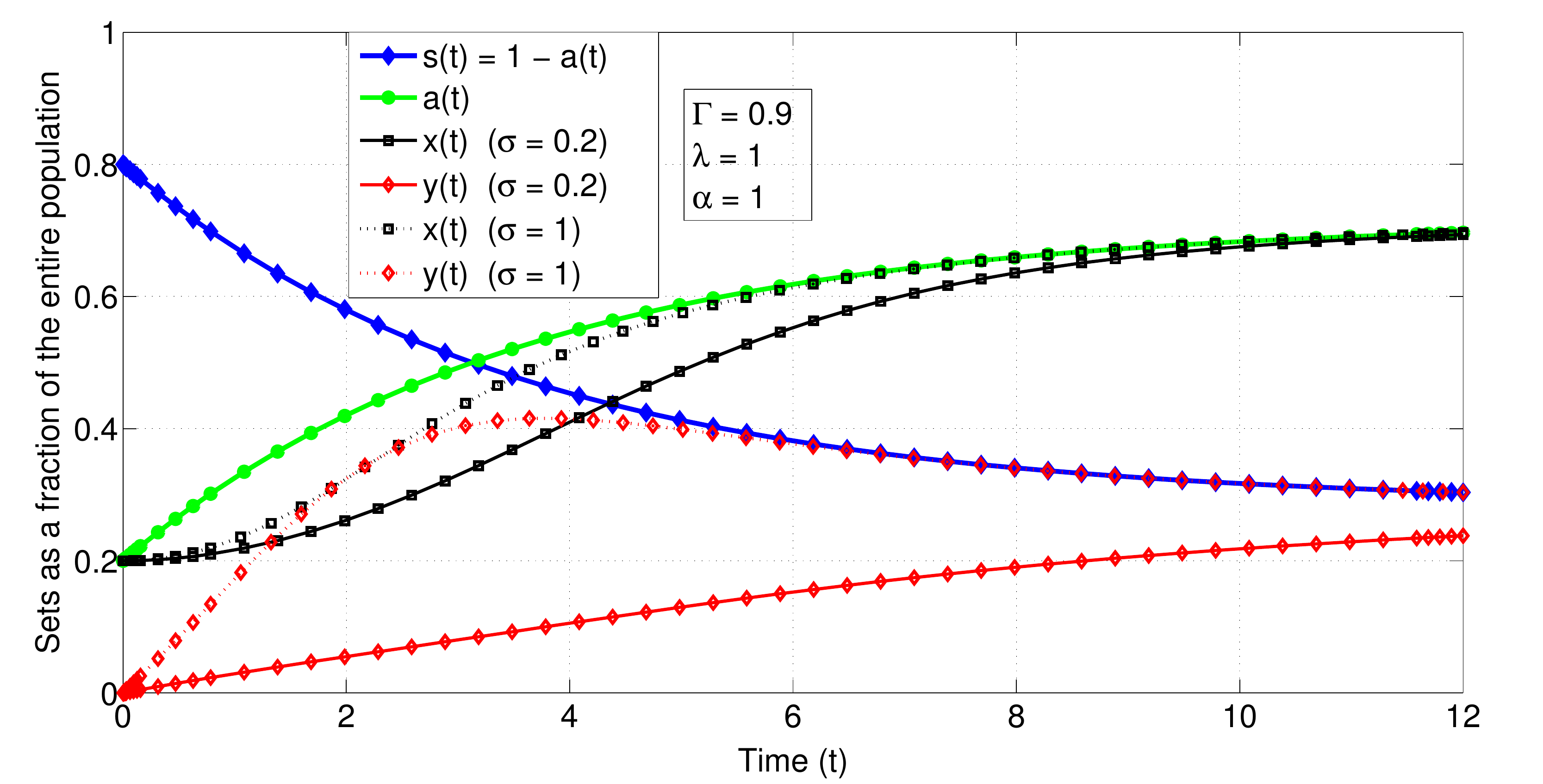}}
\vspace{-0.3cm}
\caption{Numerical simulation of HILT-SI model for two different values of relay copy probability $\sigma=1$ and $\sigma=0.2$.}
\label{plot:want_have_diff_sigma}
\vspace{-0.5cm}
\end{figure}

\subsubsection{Maximize target spread}
As content providers, we might be interested to deliver the content to as many destinations as possible by some fixed time. This might be the case,when the content is time-dependent and its usefulness expires by that fixed time. The fraction of destinations is itself increasing with time, due to the HILT interest evolution process. We would also wish to ensure that the number of relays that have the content, by that fixed time, is minimal. Our aim is then to maximize the value of $x(t)$ at some given fixed $\tau$, subject to keeping $y(\tau)$ under control. The problem can be formally posed as,
\[ \max_{\{ \sigma: y(\tau) \leq \zeta\}} x(\tau) \]

Numerically, we can compute the feasible set $\{ \sigma: y(\tau) \leq \zeta\}$ obtain the optimal solution $\sigma^\star$ that maximizes $x(\tau)$. We also observe experimentally that for a fixed $\tau$, $x(\tau)$ and $y(\tau)$ are monotonically increasing with $\sigma$. 

Figure~\ref{plot:opt_target_size_varying_tau} shows the variation of $\sigma^\star$ and the corresponding optimal $x_(\tau)^{\star}$ for various values of $\tau$ and a fixed fraction $\zeta$. Observe that, for $\tau$ small, we can afford a high value of $\sigma$ since $y(t)$ cannot exceed $\zeta$ in such a short time. Also, when $\tau$ is large, we can have a high $\sigma$, provided the interest evolution process is strong enough (high $\Gamma$). This will ensure that the number of relays $s(t)$ will be low enough by $\tau$, and since $y(t) \leq s(t)$, $\forall t$, $y(t)$ cannot exceed $\zeta$. 

\begin{figure}
\centerline{\includegraphics[scale=0.3]{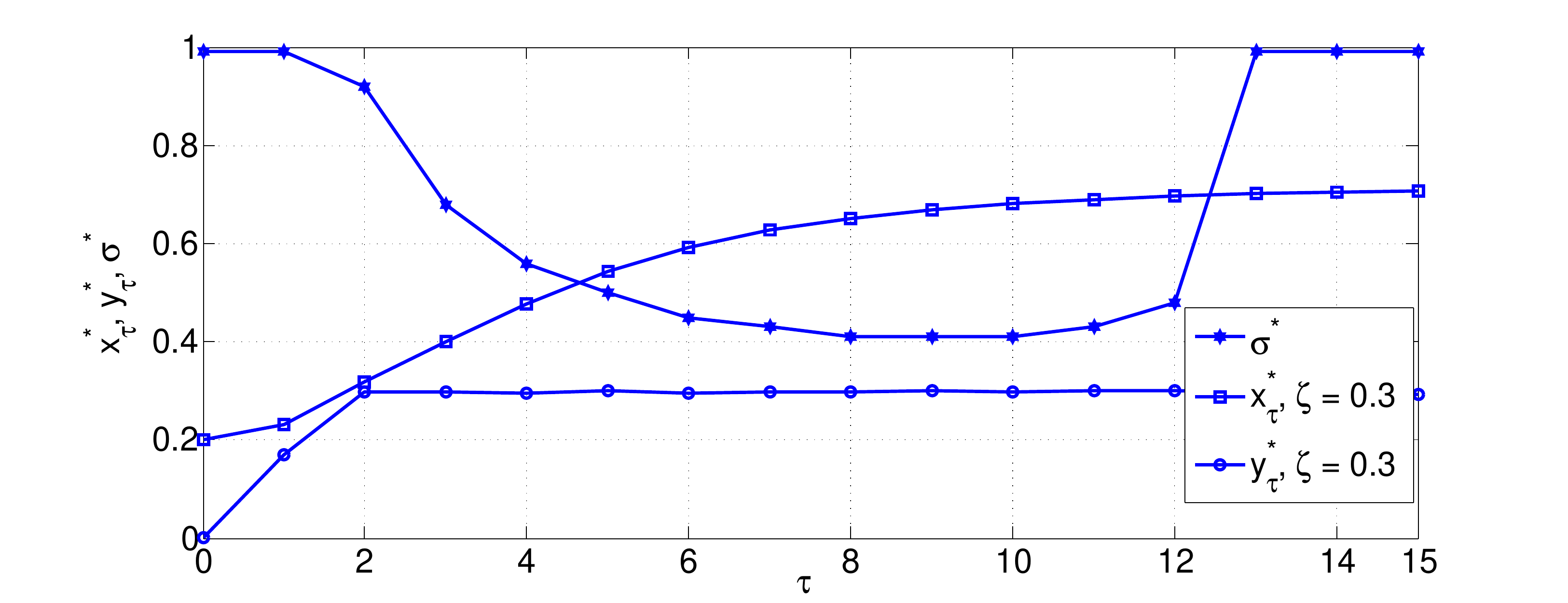}}
\vspace{-0.3cm}
\caption{Maximize target spread: The optimal solution plotted for a fixed value of $\zeta$ and varying $\tau$.}
\label{plot:opt_target_size_varying_tau}
\vspace{-0.5cm}
\end{figure}
\begin{figure}
\centerline{\includegraphics[scale=0.3]{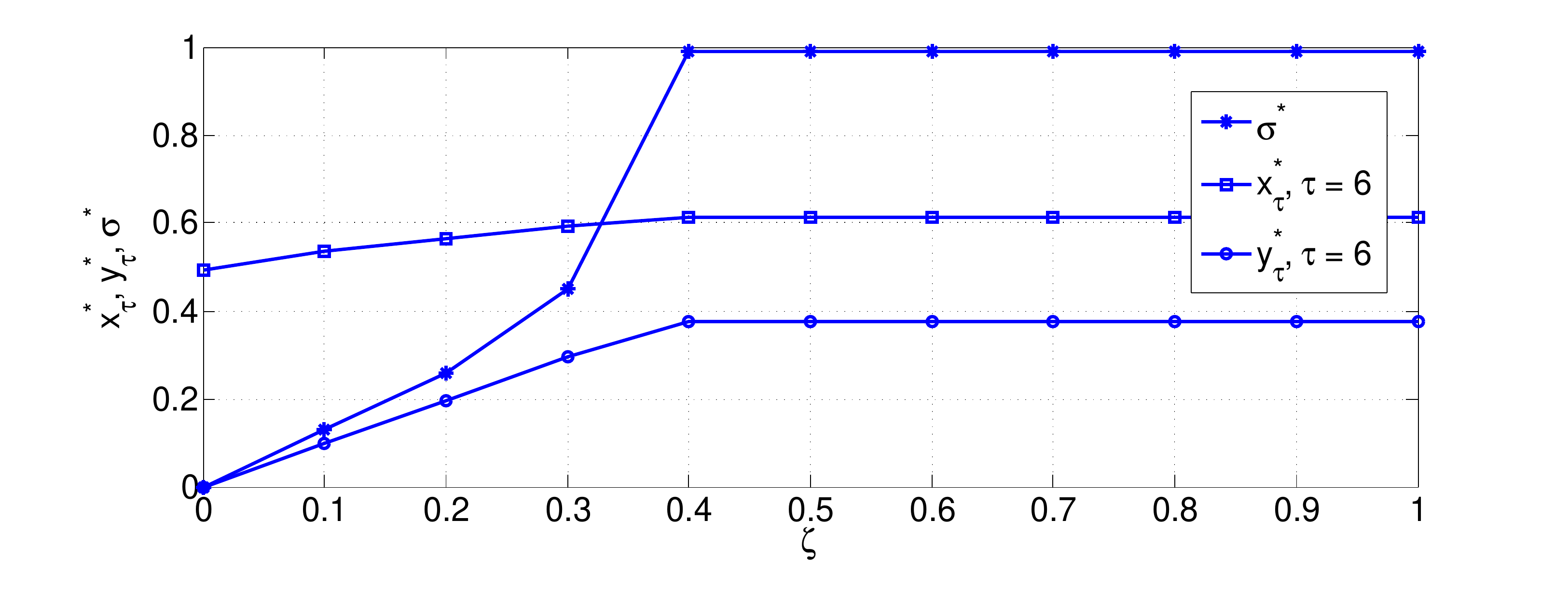}}
\vspace{-0.3cm}
\caption{Maximize target spread: The optimal solution plotted for a fixed value of $\tau$ and varying $\zeta$.}
\label{plot:opt_target_size_varying_zeta}
\vspace{-0.5cm}
\end{figure}

Figure~\ref{plot:opt_target_size_varying_zeta} shows the similar variations for various fractions $\zeta$ and a fixed time $\tau$. As $\zeta$, the constraint on the number of relay nodes with the content at $\tau$ is relaxed, we see that the optimal $\sigma^{\star}$ increases. It can be seen that $y(\tau)$ stabilizes at the value dictated by the number of relay nodes $s(\tau)$ for high value of $\sigma$. Also note that, if the fixed time of interest is high, then there is negligible contribution due to increase in $\sigma$ since $x(t)$ increases to $a(t)$ eventually irrespective of $\sigma$ (see Figure~\ref{plot:want_have_diff_sigma}).

\subsubsection{Minimize Reach time}
Another problem of interest is to deliver the content to a fraction of the destinations as early as possible. This might be the case when, delivering to a considerable fraction of destinations accrues some benefit for the content provider from the content creator, and would like to earn the benefit as early as possible.  Hence we try to minimize the time taken to reach a particular value of $x(t)$. Define $\tau_\eta = \inf \{t : x(t) \geq \eta \} $. Then we can formally pose this problem as,
\[ \min_{\{\sigma: y(\tau_\eta) \leq \zeta \}} \tau_\eta \]

We numerically compute the feasible set by computing $\tau_\eta$ for each $\sigma$ and verifying if $y(\tau_\eta) \leq \zeta$.  We experimentally observe that for fixed $\eta$, $\tau_\eta$ is monotonic decreasing in $\sigma$ and $y(\tau_\eta)$ is monotonic increasing in $\sigma$.

Figures~\ref{plot:opt_reach_time_varying_zeta} and \ref{plot:opt_reach_time_varying_eta} show the numerically evaluated optimal solutions for the above optimization problem, for various values of $\eta$ and $\zeta$. In Figure~\ref{plot:opt_reach_time_varying_zeta}, it is clear that a higher value of $\zeta$ allows us to use a high value of $\sigma$ so that we can accelerate the content delivery to destinations. As a result, we also see that the time taken to deliver the content decreases with increasing $\zeta$.

\begin{figure}
\centerline{\includegraphics[scale=0.3]{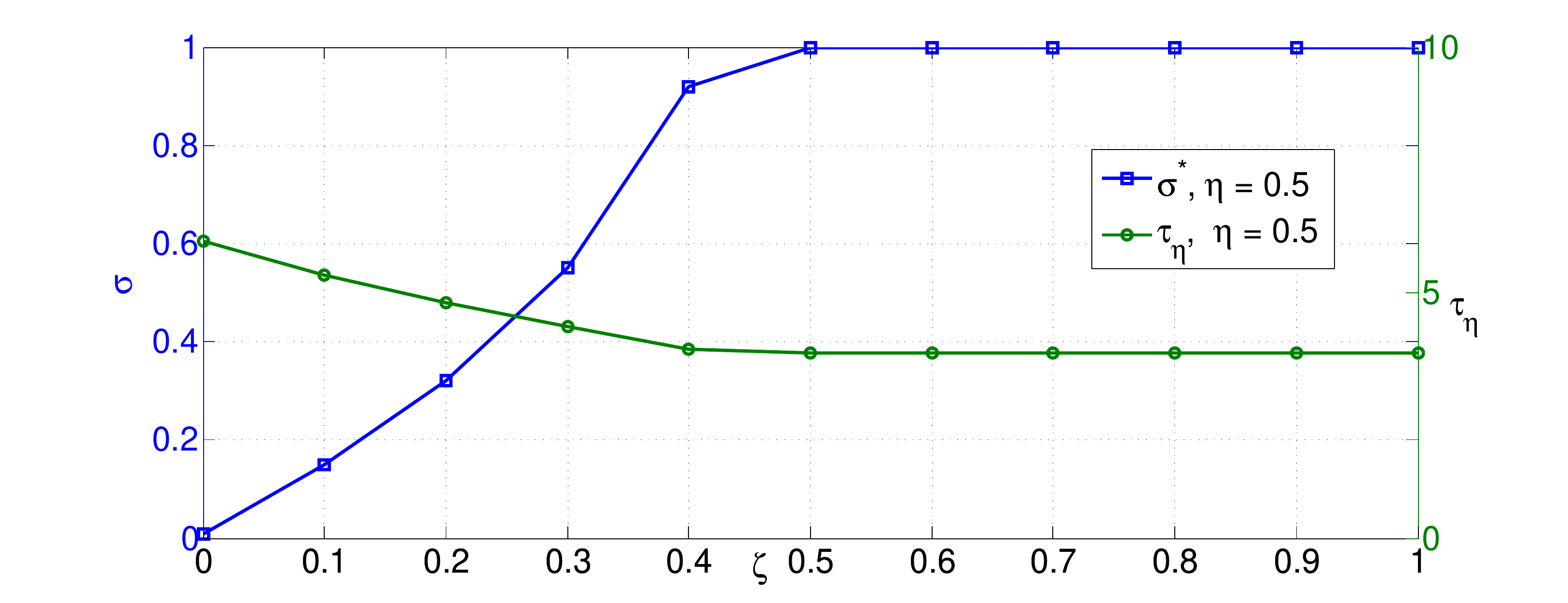}}
\vspace{-0.3cm}
\caption{Minimize reach time: The optimal solution plotted for a fixed value of $\eta$ and varying $\zeta$.}
\vspace{-0.5cm}
\label{plot:opt_reach_time_varying_zeta}
\end{figure}

\begin{figure}
\centerline{\includegraphics[scale=0.3]{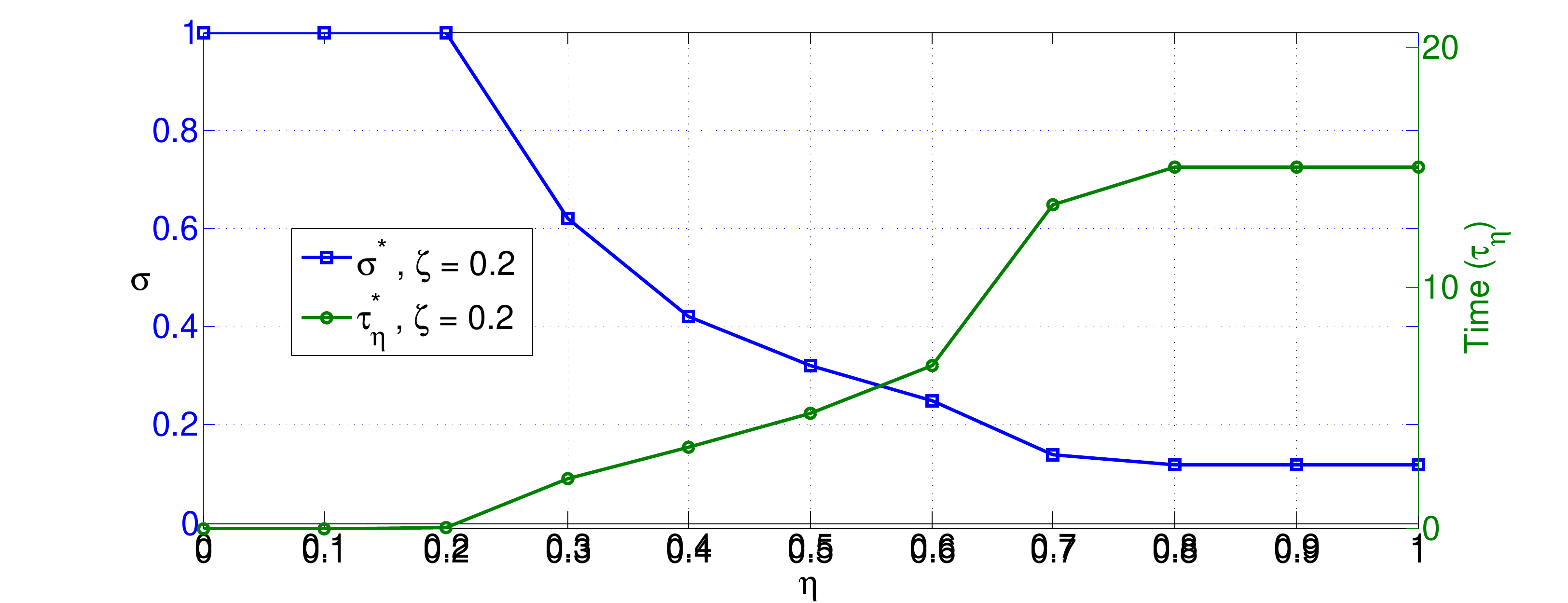}}
\vspace{-0.3cm}
\caption{Minimize reach time: The optimal solution plotted for a fixed value of $\zeta$ and varying $\eta$.}
\vspace{-0.5cm}
\label{plot:opt_reach_time_varying_eta}
\end{figure}

In Figure~\ref{plot:opt_reach_time_varying_eta}, note that since we begin with $x_0 = 0.2$, for $\eta \leq 0.2$, $\tau_\eta = 0$ and $\sigma$ can be any value. When $\eta$ increases, as long as $\tau_\eta$ is low enough, we can afford to have a high $\sigma$ and still keep $y(\tau_\eta)$ under control, since the $y(t)$ process may not have enough time to reach $\zeta$. But as $\eta$ increases further, $\sigma$ needs to be reduced, since an increased value of $\tau_\eta$ implies, the process will run longer and $y(\tau_\eta)$ might then exceed $\zeta$.

\section{Conclusions}
In this paper we studied the joint evolution of popularity and spread of content in a mobile P2P environment. We used two different models to capture the processes of interest evolution (HILT model) and content delivery (SI model), and derive their fluid limits. We then showed that the classic SIR epidemic model can be derived as a special case of the HILT model and derived explicit solutions for the model when the thresholds were uniformly distributed. We then used the fluid limits to address several optimization problems that might be of practical interest. This work can be extended in several possible directions. The co-evolution problem can be studied for other related optimization problems, and we can try obtaining explicit optimal solutions for a few special cases. One could generalize the problem for analysing the spread of popularity and spread for multiple P2P content. And finally one could consider other models for both interest evolution and for the copying process, including those where the processes of interest evolution and content delivery are dependent on each other.
\bibliographystyle{IEEEtran}
\bibliography{bib-infocom12}

\appendices
\section{Interest Evolution is a Markov Chain}
\label{app:HILT-markov}

\begin{proposition} \label{prop:hilt_dtmc}
 For the HILT model, $(B(k), D(k)), k \geq 0,$ is a discrete time Markov chain (DTMC).
\end{proposition}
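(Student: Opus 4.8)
The plan is to exploit that, conditioned on the history, the thresholds of the nodes that are still relays behave like fresh i.i.d.\ samples from a truncated version of $F$, so that the number of new conversions in the next slot is a binomial whose parameters are functions of the current state alone. First I would record the deterministic structure of the dynamics given the thresholds. Since $\mathcal{A}(k)=\mathcal{A}(k-1)\cup\mathcal{D}(k)$ with the two sets disjoint, and $\mathcal{B}(k)=\mathcal{A}(k-1)$, one has the pathwise identity $B(k+1)=B(k)+D(k)$, so the $B$-coordinate updates deterministically from the current state. Writing $\mathcal{S}(k)=\mathcal{N}\setminus\mathcal{A}(k)$ for the relay set, the activation rule (\ref{eqn:HILT-activation}) shows that a non-seed node $j$ is still a relay at the end of period $k$ precisely when $\Theta_j>\gamma_N A(k-1)=\gamma_N B(k)$, and that it converts during period $k+1$ precisely when $\gamma_N B(k)<\Theta_j\le\gamma_N B(k+1)$. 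Thus $|\mathcal{S}(k)|=N-B(k)-D(k)$ is a function of the current state, and $D(k+1)=\#\{\,j\in\mathcal{S}(k):\Theta_j\le\gamma_N B(k+1)\,\}$. The goal is to show that the conditional law of $D(k+1)$ given the whole history $\mathcal{H}_k:=\sigma\big((B(i),D(i)):0\le i\le k\big)$ depends on $\mathcal{H}_k$ only through $(B(k),D(k))$.

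The crux is the following claim, which I would prove by induction on $k$: conditionally on $\mathcal{H}_k$, the thresholds $\{\Theta_j:j\in\mathcal{S}(k)\}$ of the surviving relays are i.i.d.\ with the distribution $F$ truncated to $(\gamma_N B(k),\infty)$, i.e.\ with c.d.f.\ $\frac{F(\theta)-F(\gamma_N B(k))}{1-F(\gamma_N B(k))}$. The base case $k=0$ holds because $B(0)=0$, $F(0)=0$, and the non-seed thresholds are i.i.d.\ $\sim F$ by construction. For the inductive step, note that $\mathcal{H}_{k+1}$ adds to $\mathcal{H}_k$ only the value of $D(k+1)$, since $B(k+1)=B(k)+D(k)$ is already $\mathcal{H}_k$-measurable; in particular the next level $\gamma_N B(k+1)$ is $\mathcal{H}_k$-measurable. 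Given i.i.d.\ variables and a deterministic cut point, conditioning on how many of them fall below the cut leaves the ones above it i.i.d.\ from the further-truncated law — the standard multinomial/thinning property for i.i.d.\ samples and a fixed partition of the state space. Because the cut $\gamma_N B(k+1)$ is previsible (measurable with respect to the already-revealed past), we may reveal the partition of the threshold axis one cell at a time, which is exactly what makes the induction go through and sidesteps the apparent circularity that the levels $\{\gamma_N B(i)\}$ are themselves random.

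Granting the claim, $D(k+1)$ is, conditionally on $\mathcal{H}_k$, the number of the $|\mathcal{S}(k)|=N-B(k)-D(k)$ i.i.d.\ truncated thresholds that fall below $\gamma_N B(k+1)=\gamma_N(B(k)+D(k))$, hence
\[
D(k+1)\mid\mathcal{H}_k\;\sim\;\mathtt{Bin}\Big(N-B(k)-D(k),\,p_k\Big),\qquad p_k=\frac{F\big(\gamma_N(B(k)+D(k))\big)-F\big(\gamma_N B(k)\big)}{1-F\big(\gamma_N B(k)\big)} .
\]
Both the number of trials and the success probability $p_k$ depend on $\mathcal{H}_k$ only through $(B(k),D(k))$, and $B(k+1)=B(k)+D(k)$ is a deterministic function of the same pair. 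Therefore the one-step conditional law of $(B(k+1),D(k+1))$ given $\mathcal{H}_k$ coincides with its conditional law given $(B(k),D(k))$, which is the Markov property.

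I expect the main obstacle to be the rigorous justification of the conditional-i.i.d.-truncation claim, namely arguing that the trajectory reveals nothing about a surviving relay's threshold beyond the fact that it exceeds $\gamma_N B(k)$. The previsibility of each successive level $\gamma_N B(k+1)$, exploited through the inductive one-cell-at-a-time revelation, is the device that resolves this; everything after it (the binomial form and the dependence on the current state alone) is then routine.
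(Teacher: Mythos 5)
Your proposal is correct and follows essentially the same route as the paper: the pathwise identity $B(k+1)=B(k)+D(k)$, the conditional-i.i.d.-truncation claim (which the paper isolates as Lemma~\ref{lem:recursion_of_threshold_cdf} and applies recursively, exactly your previsible one-cell-at-a-time induction), and the resulting binomial law for $D(k+1)$ with parameters depending only on $(B(k),D(k))$. Your explicit emphasis on the previsibility of the cut point $\gamma_N B(k+1)$ is a fair articulation of why the paper's recursive application of its lemma is legitimate, but it is the same argument in substance.
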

\proof Since, for each $k \geq 0$, $B(k+1) = B(k) + D(k)$, it suffices
to show that, for $k \geq 0$, 
\begin{eqnarray*}
  P(D(k+1) = \ell | (B(0), D(0)), (B(1), D(1)), \cdots, ((B(k), D(k))= (j,m)))
\end{eqnarray*}
is a function only of $(j,m)$. We need the following simple lemma (in
which the new notation is local to the lemma).

\begin{lemma} \label{lem:recursion_of_threshold_cdf}
  Given $(X_1, X_2, \cdots, X_n)$ i.i.d.\ random variables with common
  c.d.f.\ $G(x), x \geq a,$ where $a > 0$, and given $b > a$, define
  \begin{eqnarray*}
    \mathcal{Z} = \{i: 1 \leq i \leq n, X_i > b\}, \ \mathrm{and} \ Z = |\mathcal{Z}|.
  \end{eqnarray*}
  Let $Y_1, Y_2, \cdots, Y_Z$ be random variables jointly distributed
  as $X_i, i \in \mathcal{Z}$ (ordered in the increasing sequence of
  indices). Then
  \begin{eqnarray*}
    P(Y_i \leq y_i, 1 \leq i \leq Z | Z = m) = \Pi_{i=1}^m \frac{G(y_i) - G(b)}{1 - G(b)}
  \end{eqnarray*}
  for $y_i \geq b, 1 \leq i \leq m.$
\end{lemma}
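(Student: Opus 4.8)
The plan is to condition on the precise random index set $\mathcal{Z}$ rather than merely on its cardinality $Z$, factor the resulting joint probability using independence, and then exploit the fact that every index set of a given size contributes identically. First I would dispose of the denominator of the conditional probability: since $Z = |\mathcal{Z}|$ simply counts how many of the i.i.d.\ variables exceed $b$, and each does so independently with probability $P(X_i > b) = 1 - G(b)$, the count $Z$ is Binomial$(n, 1-G(b))$. Hence $P(Z = m) = \binom{n}{m}(1-G(b))^m G(b)^{n-m}$, which requires only $0 < 1 - G(b) < 1$ so that the conditioning event has positive probability.

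Next I would compute the numerator $P(Y_i \le y_i,\, 1 \le i \le m,\, Z = m)$ by decomposing over the possible realizations of $\mathcal{Z}$. Fixing a set $S = \{s_1 < \cdots < s_m\} \subseteq \{1,\dots,n\}$, on the event $\{\mathcal{Z} = S\}$ we have $Y_j = X_{s_j}$, so the event $\{\mathcal{Z} = S\} \cap \{Y_j \le y_j,\, 1 \le j \le m\}$ is exactly $\{b < X_{s_j} \le y_j,\, 1 \le j \le m\} \cap \{X_i \le b,\, i \notin S\}$. Because the $X_i$ are independent and each $y_j \ge b$, this probability factors cleanly:
\[ P(\mathcal{Z}=S,\, Y_j \le y_j,\, 1\le j\le m) = \prod_{j=1}^m \big(G(y_j)-G(b)\big)\cdot G(b)^{\,n-m}. \]
The crucial observation is that this value is independent of the chosen set $S$: each factor $P(b < X_{s_j} \le y_j) = G(y_j) - G(b)$ depends only on the rank $j$ among the exceedances, not on the actual index $s_j$, precisely because the variables are identically distributed.

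Summing over the $\binom{n}{m}$ subsets $S$ of size $m$ therefore multiplies this common value by $\binom{n}{m}$, and dividing by $P(Z=m)$ cancels both the binomial coefficient and the factor $G(b)^{n-m}$, leaving $\prod_{j=1}^m \frac{G(y_j)-G(b)}{1-G(b)}$, which is the claim. I expect the only real difficulty to be bookkeeping rather than conceptual: one must intersect the event $\{Y_j \le y_j\}$ correctly with $\{X_{s_j} > b\}$ to obtain the truncated factor $G(y_j) - G(b)$ (and not $G(y_j)$), and one must verify carefully that the per-set contribution is genuinely invariant across all index sets $S$, so that the sum collapses to a single factor of $\binom{n}{m}$. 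No heavier machinery such as induction or order-statistic formulas is needed; the symmetry furnished by the i.i.d.\ assumption does all of the work.
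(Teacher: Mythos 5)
Your proposal is correct and follows essentially the same route as the paper, which simply states the joint probability $P(Z=m,\, Y_1 \leq y_1, \cdots, Y_m \leq y_m) = \binom{n}{m} (G(b))^{n-m} \prod_{i=1}^m (G(y_i)-G(b))$ and the binomial law $P(Z=m) = \binom{n}{m}(G(b))^{n-m}(1-G(b))^m$ and divides; your decomposition over the realized index set $\mathcal{Z}=S$, with the observation that the contribution is the same for each of the $\binom{n}{m}$ sets, is exactly the omitted justification of the paper's ``easily seen'' step. No gaps.
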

\proof (of Lemma~\ref{lem:recursion_of_threshold_cdf}) It is easily
seen that
\begin{eqnarray*}
P(Z=m, Y_1 \leq y_i, \cdots, Y_m \leq y_m) = {n \choose m} \  (G(b))^{n-m} \  \Pi_{i=1}^m (G(y_i) - G(b))
\end{eqnarray*}
and
\begin{eqnarray*}
  P(Z=m) = {n \choose m} \  (G(b))^{n-m} \  (1 - G(b))^m
\end{eqnarray*}
from which the desired result immediately follows. $\blacksquare$

Returning to the proof of Proposition~\ref{prop:hilt_dtmc}, given
$(B(0), D(0)), (B(1), D(1)), \cdots, ((B(k), D(k))= (j,m))$, via a
recursive application of Lemma~\ref{lem:recursion_of_threshold_cdf}
(starting from the initial i.i.d.\ thresholds $\Theta_i, 1 \leq i \leq
N,$ with common c.d.f.\ $F(\theta)$), we conclude that the thresholds
of the users in $\mathcal{N} \ \backslash \ (\mathcal{B}(k) \cup
\mathcal{D}(k))$ are i.i.d.\ with common c.d.f.\ $\frac{F(\theta) -
  F(\gamma j)}{1 - F(\gamma j)}$, over the range $\theta \geq \gamma
j$. At the end of period $k$, the newly interested users in
$\mathcal{D}(k)$ will serve as additional influence on the users in
$\mathcal{N} \ \backslash \ (\mathcal{B}(k) \cup \mathcal{D}(k))$. Of
these, $\ell$ will become interested (i.e., $D(k+1) = \ell$) with
probability
\begin{eqnarray*}
  {(N - (j+m)) \choose \ell} 
  \left(\frac{F(\gamma(j+m)) - F(\gamma j)}{1 -  F(\gamma j)}\right)^\ell
  \left(1 - \frac{F(\gamma(j+m)) - F(\gamma j)}{1 -  F(\gamma j)}\right)^{(N - (j+m)) - \ell}
\end{eqnarray*}
which depends on the ``history'' only via $(j,m)$, and thereby establishes the
desired result. $\blacksquare$

\section{Fluid Limit for the HILT Model}
\label{app:hilt-fluidlimit}
Recalling from Section~\ref{sec:ODE_interest}, the evolution of $(\Tilde{B}^{N}(t), \Tilde{D}^{N}(t))$ can be written in terms of its mean ``drift'' at any $t$ as follows.
\[ \Tilde{B}^{N}(t+1) = \Tilde{B}^{N}(t) + \frac{\Tilde{D}^{N}(t)}{N} + \Tilde{Z}^{N}_b(t+1) \]
\begin{eqnarray*}
\lefteqn{\Tilde{D}^{N}(t+1) = \frac{N-1}{N} \Tilde{D}^{N}(t)}\\
\hspace{-2cm} &+& \mathbb{E} \bigg[ \frac{F(\gamma_N N (\Tilde{B}^{N}(t) + \Tilde{C}^{N}(t)))- F(\gamma_N N (\Tilde{B}^{N}(t))}{1-F(\gamma_N N (\Tilde{B}^{N}(t))} \bigg]\times\\
& & (1-\Tilde{B}^{N}(t) -\Tilde{D}^{N}(t)) + \Tilde{Z}^{N}_d(t+1)
\end{eqnarray*}
where we have defined $\Tilde{Z}^{N}_b(t)$ and $\Tilde{Z}^{N}_d(t)$ in an analogous manner. The mean drift rate function, $g^{N}(\Tilde{B}^{N}(k),\Tilde{D}^{N}(k))$ per unit update step size $\frac{1}{N}$ becomes,
\[\frac{g^{N}(\Tilde{B}^{N}(t),\Tilde{D}^{N}(t))}{\frac{1}{N}} = \bigg( g^{N}_1(.), g^{N}_2(.) \bigg) \]
where 
\[g^{N}_1(.)= \Tilde{D}^{N}(t) \]
\begin{eqnarray*}
\lefteqn{g^{N}_2(.)} \\
&=&  N \mathbb{E} \bigg[ \frac{F(\gamma_N N (\Tilde{B}^{N}(t) + \Tilde{C}^{N}(t)))- F(\gamma_N N (\Tilde{B}^{N}(t))}{1-F(\gamma_N N (\Tilde{B}^{N}(t))} \bigg] \times \\ 
& &\bigg( 1-\Tilde{B}^{N}(t)- \Tilde{D}^{N}(t) \bigg) - \Tilde{D}^{N}(t) 
\end{eqnarray*}
Recall $\gamma_N \times (N-1) = \Gamma$ and $f(x)$ the density function of the threshold distribution $F$. We can then show that (see Appendix~\ref{app:limit_expectation}), 
\begin{eqnarray*}
\lim_{N \rightarrow \infty} N \mathbb{E} \bigg[ \frac{F(\gamma_N N (\Tilde{B}^{N}(t) + \Tilde{C}^{N}(t)))- F(\gamma_N N (\Tilde{B}^{N}(t))}{1-F(\gamma_N N (\Tilde{B}^{N}(t))} \bigg]\\
= \frac{f(\Gamma b)\Gamma d}{1- F(\Gamma b)} \\
\end{eqnarray*}
Letting $g_1(b,d)= d$ and $g_2(b,d) =  \frac{f(\Gamma b)\Gamma d}{1- F(\Gamma b)} (1-b-d) - d$ and define,
\[ g(b,d) := \bigg( g_1(b,d),g_2(b,d) \bigg) \]
We can then use these limiting drift functions to obtain the fluid limits corresponding to the HILT model.

\section{}
\label{app:limit_expectation}
Consider 
\[ lim_{N \rightarrow \infty} N \mathbb{E} \bigg[ \frac{F(\gamma N (\Tilde{B}^{N}(t) + \Tilde{C}^{N}(t)))- F(\gamma N (\Tilde{B}^{N}(t))}{1-F(\gamma N (\Tilde{B}^{N}(t))} \bigg]\] 
where the expectation is with respect to $\Tilde{C}^N(t)$ given $(\Tilde{B}^{N}(t), \Tilde{D}^{N}(t))$. Applying Taylor's expansion to $F(\gamma N (\Tilde{B}^{N}(t) + \Tilde{C}^{N}(t)))$ around $\Tilde{B}^{N}(t)$ we get,

\begin{eqnarray*}
\lefteqn{ lim_{N \rightarrow \infty} N \mathbb{E} \bigg[ \frac{F(\gamma N (\Tilde{B}^{N}(t) + \Tilde{C}^{N}(t)))- F(\gamma N (\Tilde{B}^{N}(t))}{1-F(\gamma N (\Tilde{B}^{N}(t))} \bigg]}\\
&=& lim_{N \rightarrow \infty} N \mathbb{E} \bigg[ \frac{ \gamma N \Tilde{C}^N(t) f(\gamma N \Tilde{B}^N(t))}{1-F(\gamma N (\Tilde{B}^{N}(t))} \bigg] \\
&+& lim_{N \rightarrow \infty} N \mathbb{E} \bigg[ \frac{ \gamma N \frac{\Tilde{C}^N(t)^2}{2} \dot{f} (\zeta)}{1-F(\gamma N (\Tilde{B}^{N}(t))} \bigg] \\
\end{eqnarray*}

In the second term since $\Tilde{C}^N(t) = \frac{C^N(t)}{N}$ where $C^N(t) \sim \mathtt{Bin} (D^N(t), \frac{1}{N})$, we have

\begin{eqnarray*}
\lefteqn{\mathbb{E}(\Tilde{C}^N(t)^2) = \frac{1}{N^2} \mathbb{E} (C^N(t)^2)}\\
&=& \frac{1}{N^2} \bigg[ D^N(t) \frac{1}{N} (1-\frac{1}{N}) + (\frac{D^N(t)}{N})^2 \bigg]\\
&=& \frac{1}{N^2} \bigg[ \Tilde{D}^N(t) (1-\frac{1}{N}) + (\Tilde{D}^N(t))^2 \bigg]\\
\end{eqnarray*}

Also $\dot{f} (\zeta)$ is bounded, since we require the drift functions to satisfy the Lipschitz condition. Hence the second term vanishes as $N \to \infty$. In the first term, noting that $\gamma N \to \Gamma$ and $\mathbb{E} (\Tilde{C}^N(t)) = \frac{\Tilde{D}^N(t)}{N}$, given $\Tilde{B}^N(t)=b$, $\Tilde{D}^N(t) =d$, the limit becomes $\frac{\Gamma d f(\Gamma b)}{1-F(\Gamma b)}$.

\section{Fluid Limit for the HILT-SI Model}
\label{app:hilt-si-fluidlimit}
Adopting the scaling as explained in Section~\ref{sec:HILT-SI-fluid}, we can write the evolution of $(X^N(t), Y^{N}(t))$ as follows:

\begin{eqnarray*}
\lefteqn{X^N(t+1) - X^N(t)} \\
&=& \lambda \alpha_N \frac{(X^N(t) + Y^N(t))}{N} (A^N(t) - X^N(t))\\
&+& \frac{\gamma_N \frac{D^N(t)}{N}}{1 - \gamma_N B^N(t)} Y^N(t) + Z^N_x(t)\\
\end{eqnarray*}
\begin{eqnarray*}
\lefteqn{Y^N(t+1) - Y^N(t)} \\
&=& \lambda \sigma_N \frac{(X^N(t) + Y^N(t))}{N}(N- A^N(t) - Y^N(t))\\
&-& \frac{\gamma_N \frac{D^N(t)}{N}}{1 - \gamma_N B^N(t)} Y^N(t)+ Z^N_y(t)\\
\end{eqnarray*}
As earlier, defining $\Tilde{X}^N(t)$, $\Tilde{Y}^N(t)$, $\Tilde{A}^N(t)$, $\Tilde{B}^N(t)$, $\Tilde{D}^N(t)$ as the fractions of entire population, we can write the mean drift functions per unit update size for $\Tilde{X}^{N}(t)$ and $\Tilde{Y}^{N}(t)$ respectively as,
\[\frac{g^{N}_x(x,y,b,d)}{\frac{1}{N}} = \bigg( \lambda \alpha_N N (x+y) (a-x)  + \frac{\gamma_N N d}{1 - \gamma_N N b} y \bigg)\]
\[\frac{g^{N}_y(x,y,b,d)}{\frac{1}{N}} = \bigg( \lambda \sigma_N N (x+y) (1-a-y)  - \frac{\gamma_N N d}{1 - \gamma_N N b} y \bigg)\]

Let $g_x(x,y,b,d) =  \lambda \alpha (x+y) (a-x)  + \frac{\Gamma d}{1 - \Gamma b} y$ and $g_y(x,y,b,d) =  \lambda \sigma (x+y) (1-a-y)  - \frac{\Gamma d}{1 - \Gamma b} y$. Combining these with the limiting drift functions of the HILT model (see Appendix~\ref{app:hilt-fluidlimit}) we can derive the fluid limits of the HILT-SI model.

\section{Kurtz's Theorem and Proof of the Theorem~\ref{thm:kurtz-theorem-hilt}}
\label{app:kurtz}

Kurtz's theorem \cite{kurtz70ode-markov-jump-processes} provides us a way by which we can, subject to certain conditions, approximate the evolution of a pure jump Markov process by the solution of a derived ODE. In this section, we derive the necessary conditions for the convergence of the pure jump Markov process to the ODE solution, along the lines of Kurtz theorem, and subsequently check them for the HILT model.

For each $N \geq 1$, $Y^{(N)}(k), k \geq 0,$ is a DTMC on  $\Delta^{(N)} \subset \Delta$ where $\Delta$ is a subset of an appropriate dimensional Euclidean space. Let $e^{(N)}_j, 1 \leq j \leq J^{(N)},$ be the possible jumps of $Y^{(N)}(\cdot)$ and  $p^{(N)}_j(y), y \in \Delta^{(N)},$ the jump probabilities, i.e.,
\[ p^{(N)}_j(y) := P( Y^{(N)}(k+1) - Y^{(N)}(k) = e^{(N)}_j | Y^{(N)}(k) = y)\]
We define an indicator variable, $ I_j(Y^{(N)}(k)) = 1$ if the jump out of $Y^{(N)}(k)$ is $e^{(N)}_j$. Then we can write
\begin{eqnarray*}
  Y^{(N)}(k) &=& Y^{(N)}(0) + 
  \sum_{i=0}^{k-1} \sum_{j=1}^{J^{(N)}} e^{(N)}_j I_j(Y^{(N)}(i))   \\
   &=& Y^{(N)}(0) + \sum_{i=0}^{k-1} \sum_{j=1}^{J^{(N)}} e^{(N)}_j  p^{(N)}_j(Y^{(N)}(i))\\
  &+& \sum_{i=0}^{k-1} \sum_{j=1}^{J^{(N)}} e^{(N)}_j (I_j(Y^{(N)}(i)) -  p^{(N)}_j(Y^{(N)}(i))) \\ 
\end{eqnarray*}
where in the second equality, the second terms comprises the sum of the successive mean jumps, and the last term the sum of successive random "noise" terms. We rewrite these compactly by defining, for $i \geq 0$, 

Define,  for $i \geq 0$
\begin{eqnarray*}
 \xi^{(N)} (i) &:=& \sum_{j=1}^{J^{(N)}} e^{(N)}_j (I_j(Y^{(N)}(i)) -  p^{(N)}_j(Y^{(N)}(i))),
%Z^{(N)}_j (k)&:=& \sum_{i=0}^{k-1}(I_j(Y^{(N)} (i) -  p^{(N)}_j(Y^{(N)}(i))),
\end{eqnarray*}
so that 
\begin{eqnarray*}
  Y^{(N)}(k) &=& Y^{(N)}(0) +  \sum_{i=0}^{k-1} f^{(N)}(Y^{(N)}(i)) +  \sum_{i=0}^{k-1} \xi^{(N)}(i)   
\end{eqnarray*}
where $ f^{(N)}(y) := \sum_{j=1}^{J^{(N)}} e^{(N)}_j
p^{(N)}_j(y)$ is the expected drift out of the state $y \in \Delta^{(N)}$.
 
With $\mathcal{F}^{(N)}_k := \sigma(Y^{(N)}(0), Y^{(N)}(1), \cdots ,
Y^{(N)}(k))$, we easily see that
\begin{lemma}
  $Z^{(N)}(k) := \sum_{i=0}^{k-1} \xi^{(N)}(i)$, $k > 0$, is a martingale with respect to $\mathcal{F}^{(N)}_k$, $k \geq 0$.
\end{lemma}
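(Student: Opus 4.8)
The plan is to verify directly the three defining properties of a discrete-time martingale for $Z^{(N)}(\cdot)$ relative to the filtration $\mathcal{F}^{(N)}_k$: adaptedness, integrability, and the conditional-increment identity $\mathbb{E}[Z^{(N)}(k+1) \mid \mathcal{F}^{(N)}_k] = Z^{(N)}(k)$. Since $Z^{(N)}(k+1) - Z^{(N)}(k) = \xi^{(N)}(k)$, the whole statement reduces to showing that each noise increment is conditionally centered, i.e.\ $\mathbb{E}[\xi^{(N)}(k) \mid \mathcal{F}^{(N)}_k] = 0$.

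First I would dispose of adaptedness and integrability, which are routine. The increment $\xi^{(N)}(i)$ is built from the indicators $I_j(Y^{(N)}(i))$ (which encode the realized jump $Y^{(N)}(i+1) - Y^{(N)}(i)$) and the probabilities $p^{(N)}_j(Y^{(N)}(i))$ (a deterministic function of $Y^{(N)}(i)$); hence $\xi^{(N)}(i)$ is a measurable function of $(Y^{(N)}(i), Y^{(N)}(i+1))$, and the partial sum $Z^{(N)}(k)$ is a function of $Y^{(N)}(0), \dots, Y^{(N)}(k)$, so it is $\mathcal{F}^{(N)}_k$-measurable. Integrability follows because the number of jumps $J^{(N)}$ is finite, the jump vectors $e^{(N)}_j$ are fixed, and both the indicators and the probabilities lie in $[0,1]$, so each $\xi^{(N)}(i)$ is bounded.

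The substantive step is the conditional-mean computation, where I would use the defining property of the jump probabilities together with the Markov property of $Y^{(N)}(\cdot)$. Since $I_j(Y^{(N)}(k))$ is the indicator that the jump out of state $Y^{(N)}(k)$ equals $e^{(N)}_j$,
\[
\mathbb{E}\big[I_j(Y^{(N)}(k)) \mid \mathcal{F}^{(N)}_k\big] = P\big(Y^{(N)}(k+1) - Y^{(N)}(k) = e^{(N)}_j \mid Y^{(N)}(k)\big) = p^{(N)}_j(Y^{(N)}(k)).
\]
Because $p^{(N)}_j(Y^{(N)}(k))$ is itself $\mathcal{F}^{(N)}_k$-measurable, subtracting it and summing over $j$ with weights $e^{(N)}_j$ gives $\mathbb{E}[\xi^{(N)}(k) \mid \mathcal{F}^{(N)}_k] = 0$ termwise. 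Taking conditional expectations in $Z^{(N)}(k+1) = Z^{(N)}(k) + \xi^{(N)}(k)$ and using that $Z^{(N)}(k)$ is already $\mathcal{F}^{(N)}_k$-measurable then yields the martingale identity.

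I do not anticipate a genuine obstacle here: the only point demanding care is recognizing that $I_j(Y^{(N)}(k))$ actually depends on $Y^{(N)}(k+1)$, so $\xi^{(N)}(k)$ is \emph{not} $\mathcal{F}^{(N)}_k$-measurable and the conditioning is non-trivial, while its conditional mean collapses to $p^{(N)}_j(Y^{(N)}(k))$ precisely by the Markov property. This is exactly the compensator construction that makes $\xi^{(N)}(\cdot)$ a martingale-difference sequence and $Z^{(N)}(\cdot)$ the associated martingale.
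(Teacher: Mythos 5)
Your proof is correct and is precisely the standard compensator argument that the paper leaves implicit (the lemma is introduced only with ``we easily see that'' and no written proof): the increments $\xi^{(N)}(k)$ are bounded, adapted-after-one-step martingale differences because $\mathbb{E}\big[I_j(Y^{(N)}(k)) \mid \mathcal{F}^{(N)}_k\big] = p^{(N)}_j(Y^{(N)}(k))$ by the Markov property of $Y^{(N)}(\cdot)$, and the $p^{(N)}_j(Y^{(N)}(k))$ are $\mathcal{F}^{(N)}_k$-measurable. Your observation that $I_j(Y^{(N)}(k))$ in fact depends on $Y^{(N)}(k+1)$, so that $\xi^{(N)}(k)$ is not $\mathcal{F}^{(N)}_k$-measurable and the conditioning is genuinely doing work, is exactly the point on which the lemma turns.
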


For each $t \geq 0$, for each $N$, we can write
\begin{eqnarray}
\label{YN_floorNt_expansion}
\lefteqn{Y^{(N)}(\lfloor Nt \rfloor)} \\ 
&=& Y^{(N)}(0) + \sum_{i=0}^{\lfloor Nt \rfloor-1} f^{(N)}(Y^{(N)}(i)) +  Z^{(N)} (\lfloor Nt \rfloor)
   \hspace{1cm}
\end{eqnarray}
Suppose
\begin{eqnarray}
\label{condition_unif_convergence}
 \sup_{x \in \Delta^{(N)}}
  \|  \frac{f^{(N)}(x)}{\frac{1}{N}} -  f(x) \| \stackrel{N \to \infty}{\to} 0
\end{eqnarray}
with $f(x)$ being Lipschitz on $\Delta$, i.e., for all $x, y \in \Delta$, for some
$L > 0$,
\begin{eqnarray}
\label{condition_lipschitz}
   \| f(x) - f(y) \| \leq L \|x - y\|
\end{eqnarray}
Now, in \eqref{YN_floorNt_expansion}, if $Y^{(N)}(0) \stackrel{p}{\rightarrow} y(0)$, and the second term goes to
$0$, the process $Y^{(N)}(\lfloor Nt \rfloor), t \geq 0,$ appears to
be following
\begin{eqnarray*}
  y(t) = y(0) + \int_0^t f(y(u)) \mathrm{d} u
\end{eqnarray*}
Take $t \in [0, T]$. For given $t$, subtracting the ODE solution on both sides of \eqref{YN_floorNt_expansion}
\begin{eqnarray}
\label{eqn:combined_error_terms}
  \lefteqn{Y^{(N)}(\lfloor Nt \rfloor) - y(t)}\\ 
  &=& Y^{(N)}(0) - y(0) +  \sum_{i=0}^{\lfloor Nt \rfloor-1} f^{(N)}(Y^{(N)}(i))\\
  &-& \int_0^t f(y(u)) \mathrm{d} u + Z^{(N)} (\lfloor Nt \rfloor)
\end{eqnarray}
We will study each of the three "error" terms on the right hand side of this equation. 
First, we consider the term
\begin{eqnarray}
\lefteqn{\sum_{i=0}^{\lfloor Nt \rfloor-1} f^{(N)}(Y^{(N)}(i))-\int_0^t f(y(u))\mathrm{d} u }\\
&=& \sum_{i=0}^{\lfloor Nt \rfloor-1} f^{(N)}(Y^{(N)}(i))
 - \sum_{i=0}^{\lfloor Nt \rfloor-1} \frac{1}{N} f(Y^{(N)}(i)) \hspace{1cm}\label{term_1_error}\\
&& + \sum_{i=0}^{\lfloor Nt \rfloor-1} \frac{1}{N} f(Y^{(N)}(i)) -\sum_{i=0}^{\lfloor Nt \rfloor-1} \frac{1}{N} f \bigg( y \bigg( \frac{i}{N} \bigg) \bigg)\hspace{0.5cm}\label{term_3_error}\\
&& + \sum_{i=0}^{\lfloor Nt \rfloor-1} \frac{1}{N} f \bigg( y \bigg( \frac{i}{N} \bigg) \bigg)
  - \int_0^t f(y(u)) \mathrm{d} u \hspace{1cm} \label{term_2_error}
\end{eqnarray}

\paragraph*{Term \eqref{term_1_error}}
Using the uniform convergence hypothesis, for each $t$, and for an $\epsilon_1 >0$ there exists an $N$ large so that,
\begin{eqnarray*}
  \lefteqn{  \left\|\sum_{i=0}^{\lfloor Nt \rfloor-1} f^{(N)}(Y^{(N)}(i))
    - \sum_{i=0}^{\lfloor Nt \rfloor-1} \frac{1}{N} f(Y^{(N)}(i))\right\|} \\
  &\leq& \sum_{i=0}^{\lfloor Nt \rfloor-1} \frac{1}{N}
  \left\| \frac{f^{(N)}(Y^{(N)}(i))}{\frac{1}{N}} -  f(Y^{(N)}(i))\right\|  \\
  &\leq& \sum_{i=0}^{\lfloor Nt \rfloor-1} \frac{1}{N} \epsilon_1 = \frac{\lfloor Nt \rfloor}{N} \epsilon_1
\end{eqnarray*}

\paragraph*{Term \eqref{term_2_error}}
By the hypothesis in \eqref{condition_unif_convergence}, $f(\cdot)$ is Lipschitz on $\Delta$. The following can then be proved:
\begin{enumerate}
\item $||f(y)||\leq ||f(0)|| + L||y||$ (i.e., at most linear growth in $||y||$)
\item The ODE $\dot{y}(t) = f(y(t)),$ with initial value $y(0)$, has
  the unique solution $ y(t) = y(0) + \int_0^t f(y(u)) \mathrm{d} u$. Further, over $0 \leq t \leq T$, there are constants such that
  \begin{enumerate}
  \item $||f(y(t)|| \leq K_1(T) (1 + ||y(0)||)$
  \item $||y(t + \tau) - y(t)|| \leq K_1(T) (1 + ||y(0)||) \tau$
  \end{enumerate}
\end{enumerate}

\begin{eqnarray*}
\lefteqn{  \left\|\sum_{i=0}^{\lfloor Nt \rfloor-1} \frac{1}{N} f \bigg( y \bigg( \frac{i}{N} \bigg) \bigg)
  - \int_0^t f(y(u)) \mathrm{d} u \right\| }\\
&\leq & \sum_{i=0}^{\lfloor Nt \rfloor-1} \left\|\frac{1}{N} f \bigg( y \bigg( \frac{i}{N} \bigg) \bigg)
     -  \int_{\frac{i}{N}}^{\frac{i+1}{N}} f(y(u)) \mathrm{d} u\right\|\\
&+& \left\|\int_{\frac{\lfloor Nt \rfloor}{N}}^{t} f(y(u)) \mathrm{d} u \right\| \\
&\leq & \sum_{i=0}^{\lfloor Nt \rfloor-1} \frac{1}{2N} \left\| \bigg[ f \bigg( y \bigg( \frac{i+1}{N} \bigg) \bigg) -  f \bigg( y \bigg( \frac{i}{N} \bigg) \bigg)\right\| 
                                               + c'_1(T) \frac{1}{N}\\
&\leq & \sum_{i=0}^{\lfloor Nt \rfloor-1} c_1(T) \frac{1}{2 N^2} 
                                               + c'_1(T) \frac{1}{N}\\
&=& c_1(T) \frac{\lfloor Nt \rfloor}{2 N^2} + c'_1(T) \frac{1}{N}\\
\end{eqnarray*}
where the penultimate inequality follows from the Lipschitz property of $f(\cdot)$.

\paragraph*{Term \eqref{term_3_error}}
By using the Lipschitz property of $f(\cdot)$
\begin{eqnarray*}
  \lefteqn{ \left\| \sum_{i=0}^{\lfloor Nt \rfloor-1} \frac{1}{N} f(Y^{(N)}(i)) -
\sum_{i=0}^{\lfloor Nt \rfloor-1} \frac{1}{N} f \bigg( y \bigg( \frac{i}{N} \bigg) \bigg) \right\|} \\
&\leq& \sum_{i=0}^{\lfloor Nt \rfloor-1} \frac{1}{N} L \left\|(Y^{(N)}(i)) -
y \bigg( \frac{i}{N} \bigg) \right\|
\end{eqnarray*}

We now work on the left hand side of \eqref{eqn:combined_error_terms}
\begin{eqnarray}
\label{eqn:error_term_expanded}
  \lefteqn{  \left\| Y^{(N)}(\lfloor Nt \rfloor) - y(t) \right\| } \\ \nonumber
  &=& \left\|Y^{(N)}(\lfloor Nt \rfloor) - y\left(\frac{\lfloor Nt \rfloor}{N}\right)
  - \left( y(t) -  y\left(\frac{\lfloor Nt \rfloor}{N}\right)\right) \right\| \\ \nonumber
  &\geq& \left\|Y^{(N)}(\lfloor Nt \rfloor)
  - y\left(\frac{\lfloor Nt \rfloor}{N}\right) \right\|
  - \left\| \left( y(t) -  y\left(\frac{\lfloor Nt \rfloor}{N}\right)\right) \right\|
\end{eqnarray}
and it can be shown that, with $y(t) = y(0) + \int_0^t f(y(u))
\mathrm{d} u$, for $t \in [0,T]$,
\begin{eqnarray}
\label{eqn:using_lipschitz}
  \left\| ( y(t) 
 -  y\left(\frac{\lfloor Nt \rfloor}{N}\right) \right\| \leq c_4(T) \frac{1}{N}
\end{eqnarray}

We now use the hypothesis that $Y^{(N)}(0) \stackrel{p}{\to} y(0)$

Then, for large enough $N$, combining \eqref{term_1_error}, \eqref{term_2_error}, \eqref{term_3_error}, \eqref{eqn:error_term_expanded}, \eqref{eqn:using_lipschitz} with \eqref{eqn:combined_error_terms} for all $t \in [0,T]$, we can write
\begin{eqnarray}
\label{eqn:post_processed_error}
 \lefteqn{\left\|Y^{(N)}(\lfloor Nt \rfloor) - y(\frac{\lfloor Nt \rfloor}{N})\right\| }\\ \nonumber
  &\leq& \epsilon +  \sum_{i=0}^{\lfloor Nt \rfloor-1} \frac{1}{N} L \left\|(Y^{(N)}(i)) -
  y(\frac{i}{N}) \right\| \\ \nonumber
  && + \left\| Z^{(N)} (\lfloor Nt \rfloor) \right\| 
\end{eqnarray}

We now work on the "noise" martingale, using Burkholder Inequality(BI) for zero-mean martingales, 
\begin{eqnarray*}
\lefteqn{\mathsf{E}\left(\sup_{0 \leq t \leq T} 
  \left\|Z^{(N)} (\lfloor Nt \rfloor) \right\|^2 \right) }\\
&=& \mathsf{E} \left(
\sup_{0 \leq t \leq T} \left\|\sum_{i=0}^{\lfloor Nt \rfloor -1} \xi^{(N)}(i)\right\|^2 \right)\\
&\stackrel{BI}{\leq}& c_1 \mathsf{E} 
     \left(\sum_{i=0}^{\lfloor NT \rfloor -1} \left\|\xi^{(N)}(i)\right\|^2 \right)\\
& \leq & c_1 c_0 \frac{\lfloor NT \rfloor}{N^2}
\end{eqnarray*}
Where we have \emph{required} that $\mathsf{E} \left\| \xi^{(N)} (i)\right\|^2
\leq c_0\frac{1}{N^2}$, for all $i$.

It follows that 
\begin{eqnarray*}
  P\left(\sup_{0 \leq t \leq T} \left\| Z^{(N)} (\lfloor Nt \rfloor)\right\| 
  \leq \epsilon\right) \stackrel{N \to \infty}{\to} 1
\end{eqnarray*}
%
%\vspace{-5mm}
%\begin{eqnarray*}
% \mathsf{E} \left\|\xi^{(N)}(k)\right\|^2 
%&=& \mathsf{E} 
%    \left\|\sum_{j=1}^{J^{(N)}} e^{(N)}_j (I_j(Y^{(N)}(k)) -  p^{(N)}_j(Y^{(N)}(k)))\right\|^2\\
%&\leq&\mathsf{E}\left\|\sum_{j=1}^{J^{(N)}} e^{(N)}_j (I_j(Y^{(N)}(k)))\right\|^2 + 
%          \mathsf{E} \left\|\sum_{j=1}^{J^{(N)}}  p^{(N)}_j(Y^{(N)}(k))\right\|^2 \\
%&\leq&\mathsf{E}
%\left( \mathsf{E} \left(  \left(
%  \left\|\sum_{j=1}^{J^{(N)}} e^{(N)}_j (I_j(Y^{(N)}(k)))\right\|^2 \right) | 
%   \mathcal{F}_{k} \right) \right) \\
%&&+   \mathsf{E} 
%\left( \mathsf{E} \left(  \left(
%\left\|f^{(N)}(Y^{(N)}(k))\right\|^2 \right) |  \mathcal{F}_{k} \right) \right) \\
%&\leq& c'_0 \frac{1}{N^2} + O(\frac{1}{N^2})
%\end{eqnarray*}
%where the first term follows from the model that each node decides to
%attempt in a slot with probability $\frac{1}{N}$

\begin{theorem} (Gronwall's Inequality)
  Given a positive sequence $a_n , n \geq0,$ and $b, c \geq 0$, if, $x_n,
  n \geq 0,$ is a nonnegative sequence such that, for each $k \geq 1$
\begin{eqnarray*}
  x_k \leq b + c \sum_{i=0}^{k-1} a_i x_i
\end{eqnarray*}
then, for each $k \geq 1$
\begin{eqnarray*}
  x_k \leq b e^{c \sum_{i=0}^{k-1} a_i }
\end{eqnarray*}
\end{theorem}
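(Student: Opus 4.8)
The plan is to prove the bound by induction on $k$, routing through the product form $b\prod_{i=0}^{k-1}(1+ca_i)$ and only at the very end weakening it to the exponential via the elementary inequality $1+t \le e^t$. First I would introduce the partial-sum sequence $u_k := b + c\sum_{i=0}^{k-1} a_i x_i$, which is exactly the right-hand side appearing in the hypothesis, so that the assumption reads $x_k \le u_k$ for every $k \ge 1$. Note that $u_0 = b$ because its defining sum is empty, and that $u_k$ is nondecreasing since $c$, $a_i$, and $x_i$ are all nonnegative.

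The core of the argument is a one-step recursion for $u_k$. Since $u_{k+1} - u_k = c\,a_k x_k$ and (for $k \ge 1$) $x_k \le u_k$, multiplying by the nonnegative factor $c\,a_k$ gives $u_{k+1} \le u_k + c\,a_k u_k = (1 + c\,a_k)\,u_k$. Iterating this relation starting from $u_0 = b$ yields $u_k \le b\prod_{i=0}^{k-1}(1 + c\,a_i)$ for every $k \ge 1$. I would then apply $1 + t \le e^t$ with $t = c\,a_i \ge 0$ to each factor, so that $\prod_{i=0}^{k-1}(1+c\,a_i) \le \prod_{i=0}^{k-1} e^{c\,a_i} = e^{c\sum_{i=0}^{k-1} a_i}$, and chain $x_k \le u_k \le b\prod_{i=0}^{k-1}(1+c\,a_i) \le b\,e^{c\sum_{i=0}^{k-1} a_i}$ to conclude.

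The one genuine subtlety, and the step I would treat most carefully, is the base of the iteration at $k=0$: the bound $u_1 \le (1+c\,a_0)u_0$ needs $x_0 \le u_0 = b$. This does not follow from the hypothesis as literally quoted, since that inequality is imposed only for $k \ge 1$, and without it the conclusion is false (take $b = 0$ with $x_0 > 0$, so the hypothesis permits $x_1 > 0$ while the claimed bound forces $x_1 \le 0$). I would therefore read the $k=0$ instance of the hypothesis, in which the empty sum collapses the inequality to $x_0 \le b$, as part of the standing assumption. In the intended application, inequality~\eqref{eqn:post_processed_error}, this is automatic: there $x_0 = \|Y^{(N)}(0) - y(0)\|$, and the additive constant $b$ already absorbs an $\epsilon$, with $Y^{(N)}(0) \stackrel{p}{\to} y(0)$. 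Once $x_0 \le b$ is in hand the remaining work is the routine induction above, with no analytic obstacles.
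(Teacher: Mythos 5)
Your proof is correct, but there is nothing in the paper to compare it against: the paper states this discrete Gronwall inequality as a known auxiliary result and applies it directly to \eqref{eqn:post_processed_error} without proof. Your argument --- majorizing $x_k$ by the nondecreasing sequence $u_k = b + c\sum_{i=0}^{k-1}a_i x_i$, deriving the one-step recursion $u_{k+1} \le (1+ca_k)u_k$, iterating to the product form $b\prod_{i=0}^{k-1}(1+ca_i)$, and only then weakening via $1+t \le e^t$ --- is the standard proof, cleanly executed. More valuable is the subtlety you flagged, which is a genuine (if minor) defect in the statement as printed: with the hypothesis imposed only for $k \ge 1$, the conclusion is false, and your counterexample is valid (take $b=0$, $x_0 > 0$; the $k=1$ hypothesis $x_1 \le c\,a_0 x_0$ permits $x_1 > 0$, while the claimed bound forces $x_1 \le 0$). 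The quantifier should read $k \ge 0$, the $k=0$ instance collapsing, via the empty sum, to $x_0 \le b$. Your repaired reading is exactly consistent with how the paper uses the lemma: the bound preceding the invocation is asserted ``for $0 \leq k \leq \lfloor NT \rfloor$,'' i.e.\ including $k=0$, where it holds on the event $\|Y^{(N)}(0) - y(0)\| < \epsilon$ whose probability tends to one by the hypothesis $Y^{(N)}(0) \stackrel{p}{\to} y(0)$; so the $k=0$ hypothesis is available precisely as you assume, and no further repair is needed.
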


From \ref{eqn:post_processed_error}, on a sample path where $\sup_{0 \leq t \leq T}\| Z^{(N)}(\lfloor Nt
\rfloor)\| \leq \epsilon$ and $|| Y^{N}(0) - y(0)|| < epsilon$, we can write, for $0 \leq k \leq \lfloor NT
\rfloor$,
\begin{eqnarray*}
  \left\|Y^{(N)}(k) - y(\frac{k}{N})\right\| 
  &\leq& 3 \epsilon +  \sum_{i=0}^{k-1} \frac{1}{N} L \left\|(Y^{(N)}(i)) -
  y(\frac{i}{N}) \right\|
\end{eqnarray*}
Hence, by Gronwall's inequality, for any $\epsilon > 0$,
\begin{eqnarray*}
\lefteqn{ \mathbb{P}(\sup_{0 \leq t \leq T}\|Z^{(N)}(\lfloor Nt \rfloor)\|\leq \epsilon)}\\
  &\leq& \mathbb{P}\left(\sup_{0 \leq k \leq \lfloor NT \rfloor}
  \left\|Y^{(N)}(k) - y(\frac{k}{N})\right\| 
               \leq 2\epsilon e^{\frac{L  \lfloor NT \rfloor}{N}} \right)
\end{eqnarray*}
By the earlier assertion, the left hand term $\to 1$ as $N \to \infty$,  hence
so does the right hand term. Hence we can state the following theorem. 

\begin{theorem}
\label{thm:kurtz-equivalent}
With the notation defined earlier, if,  for all $k \geq 0$,
\begin{eqnarray*}
  \mathsf{E} \left( 
      \left\|\sum_{j=1}^{J^{(N)}} e^{(N)}_j (I_j(Y^{(N)}(k))) - f^{(N)}(Y^{(N)}(k))\right\|^2 
    |  \mathcal{F}_{k} \right) \leq c_0 \frac{1}{N^2}
\end{eqnarray*}
Further, $ \sup_{x \in \Delta^{(N)}} \left\|
  \frac{f^{(N)}(x)}{\frac{1}{N}} - f(x) \right\| \stackrel{n \to
  \infty}{\to} 0$ with $f(x)$ being Lipschitz (i.e., for all $x, y \in
\Delta$, for some $L > 0$, $ \left\| f(x) - f(y) \right\| \leq L
\left\|x - y\right\|$).  Define $y(t), 0 \leq t \leq T,$ by $ y(t) =
y(0) + \int_0^t f(y(u)) \mathrm{d} u $. Also, $Y^{(N)}(0)
\stackrel{p}{\to} y(0)$.

Then, for each $T > 0$, and each $\epsilon > 0$,
\begin{eqnarray*}
  P\left(\sup_{0 \leq t \leq T}
    \left\|Y^{(N)}( \lfloor Nt \rfloor) - y(t)\right\| 
    > \epsilon \right)  \stackrel{N \to \infty}{\to} 0
\end{eqnarray*}
\end{theorem}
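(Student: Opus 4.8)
The plan is to assemble the martingale decomposition, the three deterministic error estimates, the Burkholder bound on the noise, and the discrete Gronwall inequality developed above into a single high-probability statement. I would start from the exact decomposition \eqref{YN_floorNt_expansion}, writing $Y^{(N)}(\lfloor Nt \rfloor)$ as the sum of its initial value, the accumulated mean drift $\sum_{i=0}^{\lfloor Nt \rfloor-1} f^{(N)}(Y^{(N)}(i))$, and the martingale $Z^{(N)}(\lfloor Nt \rfloor)$. Subtracting the integral form $y(t)=y(0)+\int_0^t f(y(u))\,\mathrm{d}u$ of the ODE, as in \eqref{eqn:combined_error_terms}, reduces the problem to controlling three deterministic pieces plus the noise: the difference between $f^{(N)}$ and $\tfrac{1}{N}f$ (Term \eqref{term_1_error}), the self-referential term $\tfrac{L}{N}\sum_i\|Y^{(N)}(i)-y(i/N)\|$ (Term \eqref{term_3_error}), and the Riemann-sum error between $\tfrac1N\sum f(y(i/N))$ and $\int_0^t f(y(u))\,\mathrm{d}u$ (Term \eqref{term_2_error}).

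For the deterministic terms I would invoke exactly the bounds derived above: Term \eqref{term_1_error} is at most $\tfrac{\lfloor Nt \rfloor}{N}\epsilon_1$ by the uniform-convergence hypothesis \eqref{condition_unif_convergence}; Term \eqref{term_2_error} is $O(1/N)$ by the Lipschitz property of $f$ together with the regularity estimates (at-most-linear growth and the time-Lipschitz bound $\|y(t+\tau)-y(t)\|\le K_1(T)(1+\|y(0)\|)\tau$) of the ODE solution; and the interpolation gap $\|y(t)-y(\lfloor Nt \rfloor/N)\|\le c_4(T)/N$ from \eqref{eqn:using_lipschitz} lets me replace $y(t)$ by $y(\lfloor Nt \rfloor/N)$ at the cost of another $O(1/N)$. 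Collecting these yields the recursive inequality \eqref{eqn:post_processed_error}: on any sample path, $\|Y^{(N)}(\lfloor Nt \rfloor)-y(\lfloor Nt \rfloor/N)\|$ is bounded by a small constant plus $\tfrac{L}{N}\sum_i\|Y^{(N)}(i)-y(i/N)\|$ plus $\|Z^{(N)}(\lfloor Nt \rfloor)\|$.

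The probabilistic ingredient is the noise. Here I would use the conditional second-moment hypothesis, which is precisely $\mathsf{E}(\|\xi^{(N)}(i)\|^2\mid\mathcal{F}_i)\le c_0/N^2$, together with the fact that $Z^{(N)}$ is a martingale, and apply Burkholder's inequality to obtain $\mathsf{E}(\sup_{0\le t\le T}\|Z^{(N)}(\lfloor Nt \rfloor)\|^2)\le c_1 c_0 \lfloor NT \rfloor/N^2=O(1/N)$. Chebyshev's inequality then gives $P(\sup_{0\le t\le T}\|Z^{(N)}(\lfloor Nt \rfloor)\|\le\epsilon)\to 1$. Restricting to the high-probability event on which both this noise bound and $\|Y^{(N)}(0)-y(0)\|<\epsilon$ hold, inequality \eqref{eqn:post_processed_error} becomes a clean discrete Gronwall hypothesis $x_k\le 3\epsilon+\tfrac{L}{N}\sum_{i<k}x_i$ for $x_k=\|Y^{(N)}(k)-y(k/N)\|$, and the stated Gronwall inequality yields $x_k\le 3\epsilon\,e^{L\lfloor NT \rfloor/N}\le 3\epsilon\,e^{LT}$ uniformly in $0\le k\le\lfloor NT \rfloor$. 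Since $3\epsilon\,e^{LT}$ can be made arbitrarily small while the underlying event has probability tending to one, the supremum bound follows.

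I expect the main obstacle to be bookkeeping rather than a single hard idea: the delicate point is to couple the \emph{probabilistic} estimate on the martingale (which holds only on a high-probability event) with the \emph{pathwise} Gronwall argument, and to pass from the discrete index $k$ to the continuous supremum over $t\in[0,T]$ uniformly through the interpolation error \eqref{eqn:using_lipschitz}. One must be careful that the constant absorbing the initial-condition error, Term \eqref{term_1_error}, Term \eqref{term_2_error}, and the interpolation gap does so simultaneously and uniformly in $t$, so that the Gronwall amplification factor $e^{LT}$ stays independent of $N$.
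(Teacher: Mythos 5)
Your proposal reproduces the paper's own proof essentially step for step: the same martingale decomposition \eqref{YN_floorNt_expansion}, the same three-term splitting of the drift error with the uniform-convergence, Riemann-sum, and Lipschitz bounds, the Burkholder-plus-Chebyshev control of $Z^{(N)}$ under the conditional variance hypothesis, the interpolation estimate \eqref{eqn:using_lipschitz}, and the pathwise discrete Gronwall argument on the high-probability event where the noise and initial error are small. The only difference is cosmetic bookkeeping: your uniform bound $3\epsilon\, e^{LT}$ is in fact the consistent version of the paper's final display, which writes $2\epsilon\, e^{L\lfloor NT\rfloor/N}$ after having assumed the $3\epsilon$ Gronwall hypothesis.
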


\begin{flushright}
$\blacksquare$
\end{flushright}

\section{Application of Kurtz's theorem}
\label{app:kurtz-applied}
\subsection{Verification of Kurtz's theorem for HILT model}
\label{app:kurtz-applied-hilt}
Now we can check each of the above conditions in Theorem \ref{thm:kurtz-equivalent} for the HILT model.

\begin{itemize}
\item[(i)]\textbf{Lipschitz property}

Consider,
\[g_1(b,d)= d\]
\[ g_2(b,d) = h_F(\Gamma b) \Gamma d (1 - b - d) - d\]

\[ \frac{\partial g_1}{\partial b} = 0 ;  \frac{\partial f_1}{\partial d} =  1\]

\[ \frac{\partial g_2}{\partial b} = \Gamma^2 d \dot{h}_F (\Gamma b) (1-b-d) - \Gamma d h_F(\Gamma b); \frac{\partial g_2}{\partial d} = h_f(\Gamma b) \Gamma (1-b-2d) -1 \]

Under the assumption that $\dot{f}(\cdot)$ is bounded, we see that each of the terms above is bounded when $(b,d) \in [0,1]\times[0,1-b]$. Thus the norm of Jacobian $|| Dg(b,d) ||$ is uniformly bounded, and it follows that $g(b,d) = (g_1(b,d),g_2(b,d))$ is Lipschitz.

\item[(ii)]\textbf{Uniform Convergence}

We know that

\begin{eqnarray*}
g^{N}(\Tilde{B}^{N}(k),\Tilde{D}^{N}(k)) := \frac{1}{N} \bigg(\Tilde{D}^{N}(k),  N \mathbb{E} \bigg[ \frac{F(\gamma_N N (\Tilde{B}^{N}(t) + \Tilde{C}^{N}(t)))- F(\gamma_N N (\Tilde{B}^{N}(t))}{1-F(\gamma_N N (\Tilde{B}^{N}(t))} \bigg] \times \bigg( 1-\Tilde{B}^{N}(t)- \Tilde{D}^{N}(t) \bigg) - \Tilde{D}^{N}(t) \bigg) 
\end{eqnarray*}

Let $g_1(b,d)= d$ and $g_2(b,d) =  \frac{\mathbf{f}(\Gamma b)\Gamma d}{1- F(\Gamma b)} (1-b-d) - d$ and define,

\[ g(b,d) := \bigg( g_1(b,d), g_2(b,d) \bigg) \]

By definition, $\lim_{N \rightarrow \infty} N \gamma =\Gamma$ and by the steps in Appendix~\ref{app:limit_expectation}, the uniform convergence of $\frac{g^{N}(b,d)}{1/N}$ to $g(b,d)$ in the domain $(b,d) \in [0,\frac{1}{N},\cdots ,N]\times[0,\frac{1}{N}, \cdots,b]$ is proven. 

\item[(iii)]\textbf{Bounded Noise variance}
We will write the noise variance terms for the scaled processes $B^N(k)$ and $D^N(k)$, and from them derive the noise variance terms for the density dependent processes by scaling them by $\frac{1}{N^2}$. 

Let $Z^{N}_b(k)$ and $Z^{N}_d(k)$ be the zero mean random variables representing the noise in the drift terms of $B^N(k)$ and $D^N(k)$ respectively. We can then write,

\[ E[ | Z^{N}_b(k)|^2 | \mathcal{F}_k] = (1-\frac{1}{N})\frac{1}{N} D^N(k) \]
\[ E[ | Z^{N}_d(k)|^2 | \mathcal{F}_k] = (1-\frac{1}{N})\frac{1}{N} D^N(k) + \sum_{z=0}^{z=D^N(k)} g(z) (1-g(z)) (N- B^N(k) - D^N(k)) {D^N(k) \choose z} \frac{1}{N}^z (1-\frac{1}{N})^{D^N(k)-z} \]

where, 
\[ g(z) = \frac{F(\gamma N (B^N(k)+z)) - F(\gamma N B^N(k))}{1 - F(\gamma N B^N(k))} \]

Note that both these terms can be upper bounded by constants say $c_b$ and $c_d$ and hence the noise conditions for $\Tilde{B}^N(k)$ and $\Tilde{D}^N(k)$ are satisfied.

\item[(iv)]\textbf{Convergence of initial conditions}
By choice, we have $\Tilde{B}^{N}(0) = b(0)$ and $\Tilde{D}^{N}(0)=d(0)$. 

\end{itemize}
Thus by Theorem \ref{thm:kurtz-equivalent}, we have for each $T > 0$ and each $\epsilon > 0$,

\[ P \bigg( \sup_{0 \leq t \leq T} \big| \big| \big( \Tilde{B}^{N}( \lfloor Nt \rfloor ),\Tilde{D}^{N}( \lfloor Nt \rfloor ) \big) - \big( b(t),d(t) \big) \big| \big| > \epsilon \bigg)  \stackrel{N\rightarrow \infty}{\rightarrow} 0\]
where $(b(t),d(t))$ is the unique solution of the ODE

\[\dot{b}(t) = d(t) \]
\[\dot{d(t)} = -d(t) + \frac{\Gamma d(t)}{1 - \Gamma b(t)} (1 - b(t) - d(t))\]

with initial conditions $(b(0)=0,d(0)=a(0))$.

\begin{flushright}
$\blacksquare$
\end{flushright}

\subsection{Verification of Kurtz theorem for HILT-SI model}
\label{app:kurtz-applied-hilt-si}

The first two equations, as proved above will remain unaltered. Consider the drift equations for $\Tilde{X}^N(k)$ and $\Tilde{Y}^N(k)$ and the corresponding ODEs. We can use a similar procedure as above to verify the conditions for applying Kurtz's theorem.

\begin{itemize}
\item[(i)]\textbf{Lipschitz property}

Consider,
\[g_x(x,y,b,d)= \lambda \alpha (x+y) (a-x)  + \frac{\Gamma d}{1 - \Gamma b} y\]
\[ g_y(x,y,b,d) = \lambda \sigma (x+y) (1-a-y)  - \frac{\Gamma d}{1 - \Gamma b} y \]

where $a=b+d$. We can calculate the partial derivatives of $g_x$ and $g_y$ with respect to $x$, $y$, $b$, $d$ and note that given $b+d \leq 1$, $x \leq b+d$,  $y \leq 1-b-d$ and all quantities non-negative, we have that the Jacobian is uniformly bounded and hence the drift functions are Lipschitz. 

\item[(ii)]\textbf{Uniform Convergence}
We know that,
\[g^{N}_x(x,y,b,d) = \frac{1}{N} \bigg( \lambda \alpha_N N (x+y) (a-x)  + \frac{\gamma_N N d}{1 - \gamma N b} y \bigg)\]
\[g^{N}_y(x,y,b,d) =  \frac{1}{N} \bigg( \lambda \sigma_N N (x+y) (1-a-y)  - \frac{\gamma_N N d}{1 - \gamma N b} y \bigg)\]

Define,
\[g_x(x,y,b,d) = \lambda \alpha (x+y) (a-x)  + \frac{\Gamma d}{1 - \Gamma b} y\]
\[g_y(x,y,b,d) =  \lambda \sigma (x+y) (1-a-y)  - \frac{\Gamma d}{1 -\Gamma b} y \]

Since as $N \rightarrow \infty$, we have $\gamma_N N \rightarrow \Gamma$, $\alpha_N N \rightarrow \alpha$, $\sigma_N N \rightarrow \sigma$ and we see that the uniform convergence is straightforward.

\item[(iii)]\textbf{Bounded Noise variance}
As earlier we will write the noise variance terms for $X^N(k)$ and $Y^N(k)$ and derive the density dependent process' noise variance by scaling by $\frac{1}{N^2}$. 

Let $Z^{N}_x(k)$ and $Z^{N}_y(k)$ be the zero mean random variables representing the noise in the drift terms of $X^N(k)$ and $Y^N(k)$ respectively. We can then write,

\begin{eqnarray*}
E[ | Z^{N}_x(k)|^2 | \mathcal{F}_k] &=& \sum_{z_1=0}^{D^N(k)} \frac{\gamma z_1}{1-\gamma B^N(k)} \bigg(1- \frac{\gamma z_1}{1-\gamma B^N(k)}\bigg) Y^N(k) {D^N(k) \choose z_1} \frac{1}{N}^z_1 (1-\frac{1}{N})^{D^N(k)-z_1} \\
&+& \sum_{z_2=0}^{X^N(k)+Y^N(k)} \frac{\lambda \alpha z_2}{N^2} (1- \frac{\lambda \alpha z_2}{N^2}) (A_k - X_k) {X^N(k) + Y^N(k) \choose z_2} \frac{1}{N}^z_2 (1-\frac{1}{N})^{X^N(k)+Y^N(k) - z_2}\\
\end{eqnarray*}

\begin{eqnarray*}
E[ | Z^{N}_y(k)|^2 | \mathcal{F}_k] &=& \sum_{z_1=0}^{D^N(k)} \frac{\gamma z_1}{1-\gamma B^N(k)} \bigg(1- \frac{\gamma z_1}{1-\gamma B^N(k)}\bigg) Y^N(k) {D^N(k) \choose z_1} \frac{1}{N}^z_1 (1-\frac{1}{N})^{D^N(k)-z_1} \\
&+& \sum_{z_2=0}^{X^N(k)+Y^N(k)} \frac{\lambda \sigma z_2}{N^2} (1- \frac{\lambda \sigma z_2}{N^2}) (N- A_k - Y_k) {X^N(k) + Y^N(k) \choose z_2} \frac{1}{N}^z_2 (1-\frac{1}{N})^{X^N(k)+Y^N(k) - z_2}\\
\end{eqnarray*}

We see that the above two terms can be bound by constants $c_x$ and $c_y$ respectively and we can see that the noise conditions are satisfied.

\item[(iv)]\textbf{Convergence of initial conditions}
By choice, we have $\Tilde{X}^{N}(0) = x(0)$ and $\Tilde{Y}^{N}(0)=y(0)$. 

\end{itemize}

Thus by Theorem \ref{thm:kurtz-equivalent}, we have for each $T > 0$ and each $\epsilon > 0$,

\[ P \bigg( \sup_{0 \leq t \leq T} \big| \big| \big( \Tilde{B}^{N}( \lfloor Nt \rfloor ),\Tilde{D}^{N}( \lfloor Nt \rfloor ), \Tilde{X}^{N}( \lfloor Nt \rfloor ),\Tilde{Y}^{N}( \lfloor Nt \rfloor ) \big) - \big( b(t),d(t), x(t), y(t) \big) \big| \big| > \epsilon \bigg)  \stackrel{N\rightarrow \infty}{\rightarrow} 0\]
where $(b(t),d(t), x(t), y(t))$ is the unique solution of the ODE

\[ \dot{b} = d\]
\[ \dot{d} = \frac{\Gamma d}{1 - \Gamma b} (1-b-d) - d\]
\[ \dot{x} = \lambda \alpha (x+y) (a-x)  + \frac{\Gamma d}{1 - \Gamma b} y \]
\[ \dot{y} = \lambda \sigma (x+y) (1-a-y)  - \frac{\Gamma d}{1 - \Gamma b} y \]

with initial conditions $(b(0)=0,d(0)=d_0, x(0)=x_0 , y(0)=0)$.

\begin{flushright}
$\blacksquare$
\end{flushright}

\section{Solution of the ODE}
\label{app:ode-solving}
Given the system of ODEs 
\[ \dot{b} = d\]
\[ \dot{d} = -d + \frac{\Gamma d}{1 - \Gamma b} (1 -b - d) \]

Dividing $\dot{d}$ by $\dot{b}$, 

\[ \frac{\mathrm{d}d}{\mathrm{d}b} = \frac{\Gamma \dot{b} -\dot{b} -\Gamma \dot{b}^2}{d(1-\Gamma b)}\]

By separating the variables,

\[ \frac{\mathrm{d}d}{\Gamma -1-\Gamma d} = \frac{\mathrm{d}b}{1-\Gamma b} \]
Integrating on both sides, and after taking anti-logarithm

\[ \Gamma - 1 - \Gamma d = c_1 (1-\Gamma b) \]

Differentiating on both sides yields $\dot{d} = c_1 d$ and hence $d(t) = c_2 e^{c_1 t}$. Substituting in the above equation for $d(t)$ we get 

\[b(t) = \frac{c_2}{c_1} e^{c_1 t} + \frac{1+ c_1 - \Gamma}{\Gamma c_1} \]

Solving for constants using the initial conditions $b(0)=0$, $d(0)=d_0$ results in $ c1= -(1 + \Gamma d_0 -\Gamma) =: -r$ and $c_2=d_0$.

Hence we have,

\[b(t) = \frac{d_0}{r} - \frac{d_0}{r} e^{-rt} \]
\[d(t) = d_0 e^{-rt} \]
\begin{flushright}
$\blacksquare$
\end{flushright}

\section{Convergence of $h_{\gamma}^{(N)}(k)$ to $b_\infty$}
\label{app:hilt-convergence}
The solution of the ODE suggests that $\lim_{t \to \infty} b(t) = d_0/r$. This is consistent with the fact that $\lim_{t \to \infty} \frac{h_{\gamma}^{(N)}(k)}{N} \rightarrow \frac{d_0}{1 - (1-d_0)\Gamma} = b_\infty$ as we now proceed to show.

\[ h_{\gamma}^{(N)}(k) = k \bigg[ 1 + (N-k)\gamma \bigg[ 1 + (N-k-1)\gamma \bigg[ 1 + \cdots \]

\[ h_{\gamma}^{(N)}(k+1) = (k+1) \bigg[ 1 + (N-k-1)\gamma \bigg[ 1 + \cdots \]

Thus we can write,

\[ h_{\gamma}^{(N)}(k) = k \bigg[ 1 + \gamma (N-k) \frac{h_{\gamma}^{(N)}(k+1)}{k+1} \bigg] \]

Now substituting for $k=d_0 N$ and noting that $\Gamma = \gamma N$ we have

\[  \frac{h_{\gamma}^{(N)}(N d_0)}{N d_0} = 1 + \Gamma (1-d_0) \frac{h_{\gamma}^{(N)}(N(d_0 + \frac{1}{N}))}{N(d_0 + \frac{1}{N})} \]

Taking $N \rightarrow \infty$ and noting that $h_{\gamma}^{(N)}(k)$ is a continuous function, we have

\[  \frac{1}{d_0} \frac{h_{\gamma}^{(N)}(N d_0)}{N} = 1 + \frac{\Gamma (1-d_0)}{d_0} \frac{h_{\gamma}^{(N)}(N d_0)}{N} \]

Take limits on both sides, and solving for the unknown, 
\[\frac{h_{\gamma}^{(N)}(N d_0)}{N} \rightarrow \frac{d_0}{1 - (1-d_0)\Gamma} = b_\infty\]

\begin{flushright}
$\blacksquare$
\end{flushright}

\end{document}